\documentclass[a4paper,11pt]{article}
\linespread{1.3}
\usepackage{amsmath,a4wide}
\usepackage{amssymb}
\usepackage{amscd}
\usepackage{bm}
\usepackage{epsfig}
\usepackage{enumitem}
\usepackage{graphics}
\usepackage{graphicx}
\usepackage{float}
\usepackage{epsfig}
\usepackage{amsmath}
\usepackage{graphics}
\usepackage{graphicx}
\usepackage{color}
\usepackage{epstopdf}
\usepackage{color}
\usepackage{booktabs}
\usepackage{subfigure}
\usepackage{ifpdf}
\usepackage{amssymb}
\usepackage{cite}
\usepackage{amssymb}
\usepackage{amsthm}
\usepackage{amsmath}
\usepackage[justification=centering]{caption}
\usepackage[]{natbib}
\usepackage[titletoc]{appendix}
\usepackage{verbatim}

\newtheorem{theorem}{Theorem}
\newtheorem{lemma}{Lemma}
\newtheorem{pro}{Proposition}
\newtheorem{cor}{Corollary}
\newtheorem{remark}{Remark}
\newtheorem{assumption}{Assumption}

\begin{document}
	\title{Optimal Expansion of Business Opportunity}
	
	\author{Ling Wang\thanks{Department of Statistics, The Chinese University of Hong Kong, Shatin, N.T., Hong Kong. \newline({\tt lingwang@link.cuhk.edu.hk})}
		\and Kexin Chen\thanks{Department of Applied Mathematics, The Hong Kong Polytechnic University, Hong Kong. (\tt{kexin-neme.chen@polyu.edu.hk})}
		\and Mei Choi Chiu\thanks{Department of Mathematics \& Information Technology, The Education University of Hong Kong, Tai Po, N.T., Hong Kong. \newline({\tt mcchiu@eduhk.hk})}	 \and Hoi Ying Wong\thanks{Corresponding author. Department of Statistics, The Chinese University of Hong Kong, Shatin, N.T., Hong Kong. \newline({\tt hywong@cuhk.edu.hk})}
	}
	\date{\today}
	\maketitle                  
	
	\begin{abstract}
		Any firm whose business strategy has an exposure constraint that limits its potential gain naturally considers expansion, as this can increase its exposure. We model business expansion as an enlargement of the opportunity set for business policies. However, expansion is irreversible and has an opportunity cost attached. We use the expected optimization of utility to formulate this as a novel stochastic control problem combined with an optimal stopping time, and we derive an explicit solution for exponential utility. We apply the framework to an investment and a reinsurance scenario. In the investment problem, the cost and incentives to increase the trading exposure are analyzed, while the optimal timing for an insurer to launch its reinsurance business is investigated in the reinsurance problem. Our model predicts that the additional income gained through business expansion is the key incentive for a decision to expand. Interestingly, companies may have this incentive but are likely to wait for a period of time before expanding, although situations of zero opportunity cost or specific restrictive conditions on the model parameters are exceptions to waiting. The business policy remains on the boundary of the opportunity set before expansion during the waiting period. The length of the waiting period is related to the opportunity cost, return, and risk of the expanded business.   \\
		
		\noindent {\it Keywords:}  Finance; Optimal expansion; Constraint; Business policy;  Reinsurance strategy;  Exercise boundary. 
		
	\end{abstract}
	
	\newpage
	
\section{Introduction}
Financial companies make decisions about how to control their risk exposure optimally through business activities aimed at maximizing or minimizing various objective functions. Such decisions have recently been the focus of numerous studies (e.g., \cite{Zhou2003}, \cite{He2015}, \cite{NS2020}). In particular, the literature (\cite{OR2008} and \cite{NS2020}) has considered the risk-taking restrictions that commonly occur in the business activities opportunity set due to institutional or regulatory reasons. Commercial banks are typically constrained by domestic regulations aimed at limiting their risk-taking \citep{JoF2012}, and investment firms commonly control the discretion of their traders by placing limits on the risk of their trading portfolios \citep{OR2008}. As demonstrated in  \cite{OR2008}, these risk limits naturally translate into position limits that consider both the risk and the expected return of the position. This introduces an optimal stochastic control model, in which the controls are constrained to a specific domain.

The willingness of financial firms to reduce restrictions is not surprising, even if it comes at a cost. They may gain larger opportunity sets and subsequent profits, which can balance the costs. The internationalization of commercial banks is a typical scenario.  Most regulations that constrain a bank's risk are implemented nationally, but banks can engage in regulatory arbitrage through internalization (\cite{JoF2012}, \cite{Ongena2013}).  \cite{JoF2012} find evidence that cross-county differences in regulations encourage the flow of bank capital from more to less restrictive jurisdictions. Given the interconnected nature of financial markets and institutions, these regulatory differences enable banks to circumvent strict domestic regulations and take greater risks abroad \citep{Barth2008}.  Evidence has also been presented \citep{Galema2016} that banks internationalize to achieve higher returns through taking more risks, which implies that this strategy leads to potential financial benefits that can offset the cost of internationalization. These findings raise the question of how financial firms can identify the optimal time to mitigate restrictions (or for banks to internationalize), which we address in this paper. 

 We propose a novel model for investigating the optimal time to relax the constraint and expand a firm’s business. Consider a firm whose business activities are modeled by a control process $f_t$, $t\ge0$, which initially takes on values in $\mathcal{D}$, with risk and potential profit at any time $t$ proportional to $f_t$. Thus, various business activities simultaneously change the company's risk and its potential profit. The restriction $f \in \mathcal{D}$ reflects the risk control constraint before business expansion. The opportunity set for business policies $\mathcal{D}$ is enlarged after expansion due to the relaxing of restrictions.  When the expansion decision is irreversible and a running cost is subsequently required, the problem is mathematically formulated as an optimal control-stopping problem. 
 
 Our model can be applied to other areas, particularly insurance. Insurance companies commonly apply a proportional reinsurance strategy to divert a proportion of their risk, which has been extensively examined in the literature (e.g., \cite{GL},  \cite{Zeng2016},  \cite{LRZ2014}, \cite{Wei2019},  \cite{YW2020}). 
 The reinsurance control $f_t$ falls strictly inside the range of $[0,1]$ if the insurer purchases the reinsurance.  However, to remove the constrained opportunity set $[0,1]$, many studies simply allow the resulting optimal control to go beyond 1. This, however, is interpreted as the situation that the insurer acquiring a new business, which is somewhat vague.  As the control $f$ represents the reinsurance strategies, the new business is essentially providing reinsurance protection. The insurer takes on more risks to obtain higher return by expanding the business into the reinsurance sector. We apply the business expansion framework to this context rather than the conventional setting and assess the optimal timing required to expand into the reinsurance business. Before the expansion, the reinsurance opportunity set is $[0,1]$; it is enlarged after expansion to the positive real line, and an opportunity cost $\rho$ is charged for operating the new business.

Examining the proposed optimal expanding business problem represents a valuable contribution to the literature. The problem of characterizing the optimal stopping rule that changes the control opportunity set has not yet been explored, although the literature on optimal stopping problems in finance and insurance is extensive. In most stopping problem formulations, either the parameters of the state equations change at the optimal stopping time, such as the optimal consumption-investment problem with a retirement option (e.g., \cite{CS2006, DL2010}, \cite{YK2018},  \cite{CJW2020}),  or the objective functions are dependent on the stopping time, such as the American option pricing problem (e.g., \cite{K1990}) and optimal investment with stopping (e.g., \cite{Jian2014}). The admissible control set before and after stopping is identical in both cases.
Unlike these optimal stopping problems, the constrained
control process is much more involved as it is directly affected by the stopping time in our problem. In control-stopping problems with no constraint, the duality approach is commonly taken to separately solve the control and stopping parts (see e.g., \cite{KW2000}, \cite{DL2010},  \cite{YK2018},  \cite{Guan2017},  \cite{MXZ2019}).  Here, the original problem is transformed to a pure optimal stopping dual problem without control, and the dual-state variable is unaffected by the stopping time. With constraint, the interplay between the stopping time and the constraints on the controls induce two sequential effects: (1) they render the duality approach inapplicable because the control is constrained before stopping; and
(2) by applying the dynamic programming principle, two operators in the variational inequality are generated (see equation \eqref{VI0}).  
In this paper, we provide a new method for explicitly solving the mixed control-stopping problem by means of an auxiliary problem in the context of an exponential utility function.  We present our economic interpretations based on the solution.  Our result reveals that a higher expected return is the main motivation for business expansion.  The  level of restrictiveness of the constraint before the expansion is also important in the decision to expand.  The firm will be less motivated to expand as the original regulation is loosened, which coincides with the intuitive impression.  Our model predicts that the firm is highly likely to wait for a period of time even if it has an incentive to expand. This phenomenon is closely related to the opportunity cost. The value of the opportunity cost, the expected return, and the risk of expanding the business all determine the length of the waiting time.

The remainder of this paper is organized as follows. Section \ref{sec:Problem} introduces and formulates the general optimal business expansion problem. In Section \ref{sec:Problem}, the VI associated with the problem is presented, and the verification theorem is proven.   Section \ref{sec:Company} presents the specific context of the optimal investment expansion problem, which involves a stochastic control, an optimal stopping time, and a bounded business constraint.  The solution to the problem is also derived explicitly under an exponential utility function. Comparative statistics and economic explanations are given in  Section \ref{sec:parameter}.   We present the results of the reinsurance expansion problem  in Section \ref{sec:RePro}.   Section \ref{sec:conclude} concludes the paper.

	\section{Formulation of the optimal business expansion problem}
	\label{sec:Problem}
    
    \subsection{A motivating example}
    Consider a bank that operates in a domestic market. Its business activities are restricted by the domestic regulatory system. For example, the minimal capital requirements for market risk impose a restriction on the bank's trading portfolio in the form of a constraint on the level of the bank’s risk exposure.  We denote $\mathcal{D}_1$ as the set of business policies the bank can choose from under the regulations.  
    
    \cite{JoF2012} note that the bank has an incentive to expand its business to a market with less restrictive regulations, but if it decides to expand its operations so that it can potentially acquire greater profits, its exposure to risk increases. The bank can then select its business strategy from an enlarged set of options $\mathcal{D}_2$, where the relationship
    \begin{equation}\label{D12}
	\mathcal{D}_1 \subset \mathcal{D}_2
	\end{equation}
	is satisfied.   However, the bank incurs some costs by entering a new market. Thus, the bank must identify the best time to expand its business and broaden its acceptable set of strategies. 
    
    \subsection{Problem formulation}
     We formulate an optimal stopping problem with a stopping feature to study the above problem, as follows.  A diffusion process is used to model the surplus of a representative company. 
     Suppose that the surplus process $X_t$ follows
	\[ dX_s^{f} = \left( A_sX_s^{f} + B_s^\top f_s + C_s \right)dt + (\sigma_sf_s)^\top dW_s, ~ s \in [0, T], \]
	with the initial value $X_0 \in \mathbb{R}_+$, where $W$ represents $n$-dimensional standard Brownian motion in a probability space $(\Omega, \mathcal{F}, \mathbb{P})$.
	Here, a continuous compound interest rate in the market can explain the parameters $A_s \in \mathbb{R}$, and $C_s \in \mathbb{R}$ indicates the deterministic income or the constant liabilities the firm faces.  The business strategy is denoted by the control $f_s \in \mathbb{R}^n$, and the company then initially manages the business with control $f$ over the domain $\mathcal{D}_1 \subset \mathbb{R}^{n\times n}$. The parameters $B_s\in \mathbb{R}^n$ and  $\sigma_s \in \mathbb{R}^{n\times n}$ appear as the coefficients of control $f$, reflecting the return and risk ratios of the business policy, respectively. In addition, $T \in \mathbb{R}_+$ is a finite time horizon.
	
	In the above model, the firm is only allowed to choose the control $f$ from the domain $\mathcal{D}_1$ before expansion, thus limiting the return or the potential gain.  The firm has the option to expand its business opportunities or, equivalently, the admissible control set for $f$. This control set is enlarged after the expansion to a new domain $\mathcal{D}_2 \subset \mathbb{R}^{n \times n}$, and thus \eqref{D12} holds and the firm is allowed to take on more risks due to the expectation of more profit by choosing a business policy from a larger set.  If it decides to expand its business opportunities at $\tau$, then it must pay an opportunity cost at the rate of $\rho$ as $\tau$. This option enables the firm to choose the best time $\tau$ to expand, which constitutes an optimal stopping feature. 
	
	We formulate the problem as follows.  For any $t \in [0, T]$, the stopping time $\tau\in {\mathcal F}_t$ takes a value in $[t, T]$. Denote  $\mathcal{T}_{t, T} = \{\tau\in {\mathcal F}_t | \tau \in [t,T] \}$ as the set of  all stopping times. The surplus follows an $\mathcal{F}$-adapted stochastic process when the stopping time is incorporated:
	\begin{equation}\label{Xoriginal}
	dX_s^{\tau, f} = \left( A_sX_s^{\tau, f} + B_s^\top f_s + C_s - \rho\boldsymbol{1}_{\{s\ge\tau\}} \right)dt + (\sigma_sf_s)^\top dW_s,~ X_t = x,
	\end{equation}
	where the constant $\rho \in \mathbb{R}_+$ represents the opportunity cost rate.  
	The control $f_t \in \mathcal{D}_1$ for  $t<\tau$ and $f_t\in \mathcal{D}_2$ for $t \ge \tau$.
	
	The firm aims to maximize the expected utility function so it can select the best business expansion option and optimally manage the business:
	\begin{equation}
	J(t, x; \tau, f) = \mathbb{E}_{t, x}\left[U(T, X_T^{\tau, f})\right]
	\end{equation}
	The optimal stopping time $\tau$ and the control $f$ are determined simultaneously,  where $U(t, x) \in C^\infty([0 , T], \mathbb{R})$ represents the company's utility function. A requirement for the pair of decision values $(\tau, f)$ to be admissible is
	\begin{equation}\label{admissible0}
	\int_{0}^{T}\| f_s\|^4 ds < + \infty  \text{  a.s. in } \Omega, ~ \tau \in \mathcal{T}_{t, T}, ~ \mathbb{E}_{t,x}\left[U^-(T, X_T^{\tau, f}) \right] < + \infty, 
	\end{equation}
	where $U^- = \max\{0, -U\}$. Here, $\| \cdot\|$ is the $L^2$ norm.  Denote $\mathcal{A}_0(t, x)$ as the set of all pairs $(\tau, f)$ satisfying \eqref{admissible0}. 
	For any $t \in [0, T]$,  the set of admissible controls is defined by
	\begin{align}\label{admissible1}
	\mathcal{A}_{\tau}^{t}:=&\left\{(\tau, f) \in \mathcal{A}_0(t, x),  \text{ and } f_s: \Omega \times [t,T] \mapsto \mathbb{R}^n, \mathcal{F}_s\text{-adapted},\right. \cr 
	&\left. \text{ with }  f_s \in \mathcal{D}_1,\text{ for } s\in[t,\tau]; f_s \in \mathcal{D}_2,\text{ for } s\in[\tau,T]\right\}.
	\end{align}
	We can therefore write our problem as
	\begin{equation}\label{gen:V}
	V(t, x) \triangleq \sup_{(\tau, f) \in \mathcal{A}_\tau^t}\mathbb{E}_{t, x}\left[U(T, X_T^{\tau, f})\right] = J(t, x; \tau^*, f^*), 
	\end{equation}
	where $(\tau^*, f^*)$ is the optimal stopping time and the control.  
	
	With the restriction $(\tau, f )\in\mathcal{A}_0(t, x)$, the value function can be written into a two-layer iteration:
	\[V(t, x) = \sup_{\tau \in \mathcal{T}_{t, T}, f\in \{(f_s)_{t\leq s < \tau} \in \mathcal{D}_1 \}}\mathbb{E}_t\left[\sup_{f \in\{(f_s)_{\tau \leq s \leq T}\in \mathcal{D}_2\}  }\mathbb{E}_\tau\left[U(T, X_T^{\tau, f})\right]\right].\] 
	The optimal stopping time then interacts with the control, while the various admissible control sets ${\mathcal D}_1\subset {\mathcal D}_2$ before and after the stopping time present additional challenges, compared with the American option pricing and optimal retirement problems. In terms of the optimal expansion of investment, the bounded constraint ${\mathcal D}_1 = [0, \beta]$, with $\beta > 0$ on the investment strategy before the stopping time, introduces a further complication, as the martingale and duality approach appears much less obvious than in \cite{Guan2017}, \cite{LLS2018}, and \cite{YK2018}.
	
	We make the standard assumption for the utility function.
	\begin{assumption}\label{as:U}
		The utility function $U(t, x) \in C^\infty([0 , T], \mathbb{R})$ takes values in $\mathbb{R}$ and is strictly increasing and concave in terms of $x$. 
	\end{assumption}
	
	\subsection{Variational inequality and verification theorem}
	\label{sec:VI}
	We now establish a variational inequality (VI) to analyze Problem \eqref{gen:V}. We first consider an optimization problem for  $t\ge \tau$ so that $f_t \in \mathcal{D}_2$. Here, it becomes the classic utility maximization problem. If we can solve this classic problem we can then apply the dynamic programming principle and investigate the problem for $t<\tau$.
	
	After its business expansion, the firm focuses on management control and no longer needs to consider the stopping time $\tau$. Thus, for $t\ge \tau$, the company's problem, denoted as $V^{(1)}$, is given by
	\begin{align}\label{V(1)}
	V^{(1)} (t,x) : = \sup_{f \in \mathcal{A}_{t}^t }  \mathbb{E}\left[ U (T, X_T^{t, f})  \right]. 
	\end{align}
	From the dynamic programming principle, $V^{(1)}$ satisfies the Hamilton-Jacobi-Bellman (HJB) equation,
	\begin{equation}\label{HJB0}
	\left\{
	\begin{array}{lr}
	- \partial_tV^{(1)} - \max_{f \in \mathcal{D}_2 } \left\{ \mathcal{L}_1^fV^{(1)}\right\} = 0 ~ \text{in}~ \mathcal{N}_T; \\
	V^{(1)}(T, x) = U(T, x),~ \forall x \in \mathbb{R},
	\end{array}
	\right.
	\end{equation}
	where  $\mathcal{N}_T = [0, T]\times \mathbb{R}$, and 
	$$\mathcal{L}_1^f \triangleq \frac{1}{2}f^\top\sigma^\top\sigma f \partial_{xx} + (Ax + B^\top f + C -\rho)\partial_x.$$
	
	The value of $V^{(1)}$ coincides with that of $V$ at $\tau$. For $t<\tau$, we have
	\begin{align*}
	V(t,x) &= \sup_{(\tau, f)   \in \mathcal{A}_{\tau}^t } \mathbb{E}_t\left[ U (T, X_T^{\tau, f})  \right] = \sup_{(\tau, f)   \in \mathcal{A}_{\tau}^t } \mathbb{E}_t\left[\sup_{f \in \mathcal{A}_\tau^\tau}\mathbb{E}_\tau\left[U (T, X_T^{\tau, f}) \right] \right]\\
	& = \sup_{(\tau, f)   \in \mathcal{A}_{\tau}^t } \mathbb{E} \left[ V^{(1)}(\tau, X_{\tau}^{\tau, f})  \right]. 
	\end{align*}
	Again, from the dynamic programming principle, we derive the following VI for the value function $V$ defined in \eqref{gen:V}. 
	\begin{align}\label{VI0}
	\left\{\begin{array}{ll}
	-\partial_tV - \max_{f\in \mathcal{D}_1} \left\{\mathcal{L}^fV\right\} = 0, & \text{if }  V > V^{(1)} \text{ and } (t, x) \in \mathcal{N}_T;\\
	-\partial_tV - \max_{f\in \mathcal{D}_1} \left\{\mathcal{L}^fV\right\} \geq 0, & \text{if } V = V^{(1)} \text{ and } (t, x) \in \mathcal{N}_T;\\
	V(T, x) = U(T,x), & \forall x \in \mathbb{R},
	\end{array}
	\right.
	\end{align}
	where 
	$$\mathcal{L}^f \triangleq \frac{1}{2}f^\top\sigma^\top\sigma f \partial_{xx} + (Ax + B^\top f + C)\partial_x. $$
	 Due to the nonlinear term $\max_{f\in \mathcal{D}_1} \left\{\mathcal{L}^fV\right\}$, the VI \eqref{VI0} is very different from the classical VI, which only consists of a linear operator.
	If the pre-opportunity set ${\mathcal D}_1$ is unbounded and is, for example, a positive real line, a possible approach to solving this is to apply the duality formulation. However, the duality approach is not suitable for our problem as we are broadly encountering a bounded pre-opportunity set. 
	
	The following verification theorem shows the equivalence between the VI \eqref{VI0} and the Problem \eqref{gen:V}. 
	Recall the space $W_p^{2,1}(\Omega) = \{u(y, t) \in L^p(\Omega)| u_{yy}, u_t \in L^p(\Omega)\}$ and $W_{p, loc}^{2,1}(\Omega) = \{u \in W_p^{2, 1}(\Omega')| \text{for any } \Omega' \subset \Omega\}$ defined in \cite{LSU1968}.

	\begin{theorem}\label{thm1}
		Suppose that $v^1$ is the strong solution to the HJB equation \eqref{HJB0} and $v$ the strong solution to VI \eqref{VI0}. Then, suppose that $v$ and $v^1$ have the following properties: 
		\begin{enumerate}[label = (\roman*)]
			\item $v, v^1 \in W^{2,1}_{p, \text{loc}}(\mathcal{N}_T) \cap C(\mathcal{N}_T) $ with some $p > 2$. \label{condition1}
			\item $\partial_{x}v  > 0$ and $\partial_{xx}v < 0$ a.e. in $\mathcal{N}_T$. \label{condition2} 
			\item $\mathbb{E}_{t,x}\left[\int_{t}^{T}(\partial_x{v})^4ds \right]< \infty$, and $\mathbb{E}_{t,x}\left[\int_{t}^{T}(\partial_x{v^1})^4ds \right]< \infty$ for $t \in [0, T]$. \label{condition3'}
			\item The optimal stopping time $\tau = \inf\{s\geq t: v(s, X_s) = v_1(s, X_s)\}$ has a unique solution $\bar{\tau}$, and there is a function $\bar{f}\in \mathcal{A}_{\bar{\tau}}$ such that 
			\begin{align*}
			-\partial_tv - \mathcal{L}^{\bar{f}}v & = -\partial_tv - \max_{f\in \mathcal{D}_1} \left\{\mathcal{L}^fv\right\} = 0, ~\text{if} ~ v > v^1;\\
			- \partial_tv^1 -  \mathcal{L}_1^{\bar{f}}v^1 &= - \partial_tv^1 - \max_{f \in \mathcal{D}_2} \left\{ \mathcal{L}_1^fv^1\right\} = 0, ~\text{if} ~ v = v^1,
			\end{align*}
			where $\mathcal{L}^{\bar{f}} \triangleq \frac{1}{2}\bar{f}^\top\sigma^\top\sigma \bar{f} \partial_{xx} + (Ax + B^\top \bar{f} + C)\partial_x$ and $\mathcal{L}_1^{\bar{f}} \triangleq \frac{1}{2}\bar{f}^\top\sigma^\top\sigma \bar{f} \partial_{xx} + (Ax + B^\top \bar{f} + C - \rho)\partial_x$.  In addition, the following SDE 
			\begin{align*}
			dX_s^{\tau, \bar{f}} = \left( A_sX_s^{\tau, f} + B_s^\top \bar{f}_s + C_s - \rho\boldsymbol{1}_{\{s\ge\tau\}} \right)dt + (\sigma_s\bar{f}_s)^\top dW_s,~ X_t = x,
			\end{align*}
			enables a unique strong solution. \label{condition3}
		\end{enumerate}
		Then, 
		$$v(t, x) = V(t, x),~ \forall (t, x) \in \mathcal{N}_T.$$
		and $(\tau^*, f^*) = (\bar{\tau}, \bar{f})$ is the optimal stopping time and control. 
	\end{theorem}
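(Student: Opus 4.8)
The plan is to establish the two inequalities $V \le v$ and $v \le V$ separately and then read off optimality of $(\bar\tau,\bar f)$. The first inequality shows that $v$ dominates the payoff of \emph{every} admissible pair, and the second shows that the candidate pair supplied by condition \ref{condition3} actually attains $v$. The common engine is a generalized (It\^o--Krylov) change-of-variables formula applied to the Sobolev solutions $v$ and $v^1$, combined with a localization argument that turns the resulting local martingales into genuine martingales whose expectations vanish.

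For the upper bound, I would fix an arbitrary $(\tau,f)\in\mathcal{A}_\tau^t$ and split the horizon at the switching time $\tau$. On $[t,\tau]$ the control obeys $f_s\in\mathcal{D}_1$, so $\mathcal{L}^{f_s}v\le\max_{g\in\mathcal{D}_1}\{\mathcal{L}^g v\}$, and both branches of the VI \eqref{VI0} give $\partial_t v+\mathcal{L}^{f_s}v\le 0$ a.e.; applying It\^o to $v(s,X_s^{\tau,f})$ over $[t,\tau]$ and taking expectations yields $\mathbb{E}_{t,x}[v(\tau,X_\tau^{\tau,f})]\le v(t,x)$. On $[\tau,T]$ the control obeys $f_s\in\mathcal{D}_2$ and the dynamics carry the cost $-\rho$, so the HJB equation \eqref{HJB0} gives $\partial_t v^1+\mathcal{L}_1^{f_s}v^1\le 0$, whence $\mathbb{E}_{t,x}[v^1(T,X_T^{\tau,f})]\le\mathbb{E}_{t,x}[v^1(\tau,X_\tau^{\tau,f})]$. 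Since any VI solution satisfies $v\ge v^1$ everywhere and $v^1(T,\cdot)=U(T,\cdot)$, chaining these estimates through the tower property gives
\[ \mathbb{E}_{t,x}\!\left[U(T,X_T^{\tau,f})\right]\le\mathbb{E}_{t,x}\!\left[v^1(\tau,X_\tau^{\tau,f})\right]\le\mathbb{E}_{t,x}\!\left[v(\tau,X_\tau^{\tau,f})\right]\le v(t,x), \]
and taking the supremum over admissible pairs yields $V(t,x)\le v(t,x)$.

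For the lower bound I would run the state under $(\bar\tau,\bar f)$ with $\bar\tau=\inf\{s\ge t:v(s,X_s)=v^1(s,X_s)\}$. On $[t,\bar\tau]$ the state remains in the continuation region $\{v>v^1\}$, where by condition \ref{condition3} the VI holds with equality along $\bar f$, i.e.\ $\partial_t v+\mathcal{L}^{\bar f}v=0$, so It\^o gives $\mathbb{E}_{t,x}[v(\bar\tau,X_{\bar\tau})]=v(t,x)$. At $\bar\tau$ one has $v(\bar\tau,X_{\bar\tau})=v^1(\bar\tau,X_{\bar\tau})$ by definition of the hitting time, and on $[\bar\tau,T]$ the control $\bar f$ attains the maximum in \eqref{HJB0}, so $\partial_t v^1+\mathcal{L}_1^{\bar f}v^1=0$ and $\mathbb{E}_{t,x}[v^1(T,X_T)]=\mathbb{E}_{t,x}[v^1(\bar\tau,X_{\bar\tau})]$. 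Reading these equalities in sequence produces $v(t,x)=\mathbb{E}_{t,x}[U(T,X_T^{\bar\tau,\bar f})]=J(t,x;\bar\tau,\bar f)\le V(t,x)$, which together with the upper bound forces $v=V$ and certifies $(\bar\tau,\bar f)$ as optimal.

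The step I expect to be the main obstacle is technical rather than conceptual: justifying the change of variables and the vanishing of the stochastic integrals in expectation. Because $v$ and $v^1$ lie only in $W^{2,1}_{p,\mathrm{loc}}$ rather than in $C^{2,1}$, the classical It\^o formula is unavailable, and I would invoke the generalized formula of Krylov, for which the exponent $p>2$ in condition \ref{condition1} is precisely what the one-dimensional state setting demands. To upgrade the local martingales to true martingales I would localize by exit times $\tau_n$ increasing to $\tau$ (respectively to $T$) and pass to the limit using moment control: the stochastic integrand is $\partial_x v\,(\sigma_s f_s)^\top$, whose squared norm is handled by H\"older's inequality through the $L^4$ bounds on $\partial_x v,\partial_x v^1$ in condition \ref{condition3'} and the $L^4$ bound on $f$ in \eqref{admissible0}, giving the uniform integrability needed to take the limit. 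Condition \ref{condition2} enters by making the quadratic form in $f$ strictly concave, so that the pointwise maximizer defining $\bar f$ exists and is measurable. A subtlety specific to our control-stopping problem, absent from a standard verification theorem, is splitting the It\^o expansion exactly at the random switching time $\tau$, where both the admissible domain ($\mathcal{D}_1\to\mathcal{D}_2$) and the drift (through $-\rho\boldsymbol{1}_{\{s\ge\tau\}}$) change discontinuously; I would manage this junction via the tower property together with the inequality $v\ge v^1$ evaluated at $\tau$.
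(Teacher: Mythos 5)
Your proposal is correct and follows essentially the same route as the paper's proof: It\^o's formula applied to $v$ on $[t,\tau]$ and to $v^1$ on $[\tau,T]$, the VI and HJB inequalities to get $v \geq V$ for arbitrary admissible pairs, the obstacle property $v \geq v^1$ to bridge the two pieces at $\tau$, equalities along $(\bar\tau,\bar f)$ for the reverse bound, and condition \ref{condition3'} plus the $L^4$-admissibility of $f$ via Cauchy--Schwarz to make the stochastic integrals true martingales. Your added remarks on the It\^o--Krylov formula and localization only make explicit what the paper leaves implicit when invoking It\^o's formula for $W^{2,1}_{p,\mathrm{loc}}$ solutions.
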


   \section{Optimal investment expansion}
   \label{sec:Company}
   In this section, we apply our investment expansion problem framework to an investment firm and derive an explicit solution to the problem under an exponential utility function.
   
   Investment firms usually impose limits on the risk of their trading portfolios, due to the regulatory environment.  The value at risk is typically applied as the risk measure, and the risk limit is then equivalent to a constraint on the trading portfolio, as examined in \cite{OR2008}. To achieve simplicity without sacrificing generalization,  we consider a one-dimensional case. Assume that the investor invests in a risk-free bond with interest rate $r\in \mathbb{R}_+$ and a risky asset following the Black-Scholes model with drift rate $\mu + r\in \mathbb{R}_+$ and volatility $\sigma\in \mathbb{R}_+$. 
   The investment firm's surplus then follows the stochastic process:
   \[d X_{s}^{\tau, f}=\left(r X_{s}^{\tau, f}+ \mu f_s  -\delta\right) d s +\sigma f_s d W_s\]
   with an initial value $x_0 \in \mathbb{R}_+$,  where $\delta \in \mathbb{R}_+$ represents the constant liability the firm has to pay per unit of time or the debt rate, and $W$ is a one-dimensional standard Brownian motion.  Here, $f$ is the trading strategy of the firm, representing the investment amount in the risky asset, which affects both the return and risk of the dynamic system.  A greater return is expected when the firm decides to take a greater risk. The control $f$ can only take its value in a bounded set $[0, \beta]$ at the beginning, with $\beta \in \mathbb{R}_+$, reflecting the risk limitation.  Thus, $\mathcal{D}_1 = [0, \beta]$ in this case. 
   
   If the investment firm wants to relax the constraint through business expansion and take on a higher risk to increase the probability of a large return, then an opportunity cost at the rate of $\rho \in \mathbb{R}_+$ would be charged. After expansion, we assume $\mathcal{D}_2 = [0, \infty)$.  For any fixed $t\in [0,T]$, the set of admissible strategies in \eqref{admissible1} becomes
   \begin{align}\label{admissible}
   \mathcal{A}_{\tau}^{t}:=&\left\{(\tau, f) \in \mathcal{A}_0(t, x),  \text{ and } f_s: \Omega \times [t,T] \mapsto \mathbb{R}, \mathcal{F}_s\text{-adapted},\right. \cr 
   &\left. \text{ with }  f_s \in [0,\beta],\text{ for } s\in[t,\tau]; f_s \in [0,+\infty),\text{ for } s\in[\tau,T]\right\}.
   \end{align}  The surplus process then becomes 
   \begin{align}\label{X}
   d X_{s}^{\tau, f}=\left(r X_{s}^{\tau, f}+ \mu f_s  -\delta - \rho \boldsymbol{1}_{\{s\ge\tau\}} \right) d s +\sigma f_s d W_s,~ X_t = x,
   \end{align}
   The state process in \eqref{X}  clearly resembles that in \eqref{Xoriginal} for $n = 1$.    The optimization problem of the investment firm at time $t$ is given by 
   \begin{align}\label{def:v}
   V(t,x) = \sup_{(\tau, f) \in \mathcal{A}_{\tau}^t } \mathbb{E}_{t, x}\left[ U (T, X_T^{\tau, f})  \right], 
   \end{align}
   where $X$ follows \eqref{X}, and $U$ is the utility function that satisfies Assumption \ref{as:U}. 
   \begin{remark}
   	In \cite{Zhou2003}, a firm’s optimal dividend problem is examined, and its business activities are modeled with a control process. The control is required to take on values in the interval $[\alpha, \beta], 0< \alpha < \beta < \infty$.  We, however, impose the constraint $[0, \beta]$ on the control before the stopping (extending) time.
   \end{remark}
  
   For our investment expansion problem, the HJB equation for $V^{(1)}$ in \eqref{HJB0} and VI for $V$ in \eqref{VI0} are derived as follows.
   \begin{equation}\label{HJB}
   \left\{
   \begin{array}{lr}
   - \partial_tV^{(1)} - \max_{f \ge 0} \left\{ \mathcal{L}_1^fV^{(1)}\right\} = 0 ~ \text{in}~ \mathcal{N}_T; \\
   V^{(1)}(T, x) = U(T, x), ~\forall x \in \mathbb{R},
   \end{array}
   \right.
   \end{equation}
   where  $\mathcal{N}_T = [0, T]\times \mathbb{R}$, and 
   $\mathcal{L}_1^f \triangleq \frac{1}{2}\sigma^2f^2\partial_{xx} + (rx + \mu f - \delta -\rho)\partial_x.$
   \begin{align}\label{VI}
   \left\{\begin{array}{ll}
   -\partial_tV - \max_{f\in [0,\beta]} \left\{\mathcal{L}^fV\right\} = 0, & \text{if }  V > V^{(1)} \text{ and } (t, x) \in \mathcal{N}_T;\\
   -\partial_tV - \max_{f\in [0, \beta]} \left\{\mathcal{L}^fV\right\} \geq 0, & \text{if }  V = V^{(1)} \text{ and } (t, x) \in \mathcal{N}_T; \\
   V(T, x) = U(T,x), & \forall x \in \mathbb{R},\\
   \end{array}
   \right.
   \end{align}
   where 
   $\mathcal{L}^f \triangleq \frac{1}{2}\sigma^2f^2\partial_{xx} + (rx + \mu f - \delta)\partial_x.$
   We denote 
   $$\textbf{CR} := \{(t, x): V(t, x) > V^{(1)}(t, x)\}, ~ \textbf{ER} := \{(t, x): V(t, x) = V^{(1)}(t, x)\}.$$
   as the continuation and expansion regions, respectively. 
 
   \begin{lemma}\label{lem0}
   	Under Assumption \ref{as:U}, $V^{(1)}$, the solution to the HJB equation \eqref{HJB}, satisfies
   	\begin{align*}
   	\partial_xV^{(1)} \geq 0, ~ \partial _{xx}V^{(1)} \leq 0. 
   	\end{align*}
   \end{lemma}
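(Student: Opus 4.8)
The plan is to work with the stochastic-control representation of $V^{(1)}$ in \eqref{V(1)}, namely $V^{(1)}(t,x)=\sup_{f}\mathbb{E}[U(T,X_T^{t,f})]$ with $X$ solving the post-expansion dynamics and $f_s\in[0,\infty)$, and to deduce the two sign conditions from the corresponding properties of $U$ together with the affine structure of the state equation. Since $V^{(1)}$ is a strong solution lying in $W^{2,1}_{p,\mathrm{loc}}(\mathcal{N}_T)$, it suffices to show that, for each fixed $t$, the map $x\mapsto V^{(1)}(t,x)$ is nondecreasing and concave: a nondecreasing $W^{2,1}_p$ function has $\partial_x V^{(1)}\ge0$ a.e., and a concave one has $\partial_{xx}V^{(1)}\le0$ a.e. (The identification of the strong solution of \eqref{HJB} with the value function \eqref{V(1)} is a standard verification under Assumption \ref{as:U}.)

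The crucial structural observation is that the controlled SDE $dX_s=(rX_s+\mu f_s-\delta-\rho)\,ds+\sigma f_s\,dW_s$ is affine in the pair $(x,f)$. For monotonicity, fix $t$ and $x_1<x_2$, and let $X^{1},X^{2}$ be the solutions started from $x_1,x_2$ under a common control $f$. The difference $Y_s:=X^{2}_s-X^{1}_s$ satisfies the linear equation $dY_s=rY_s\,ds$ with $Y_t=x_2-x_1>0$, because the control and diffusion terms cancel; hence $Y_s=(x_2-x_1)e^{r(s-t)}>0$ and $X^{2}_T\ge X^{1}_T$ pathwise. As $U(T,\cdot)$ is increasing and $U^-$ is therefore nonincreasing along this ordering, every $f$ admissible for $x_1$ stays admissible for $x_2$ and gives $U(T,X^{2}_T)\ge U(T,X^{1}_T)$; taking expectations and the supremum over $f$ yields $V^{(1)}(t,x_1)\le V^{(1)}(t,x_2)$.

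For concavity, fix $t$, $x_1,x_2$ and $\lambda\in[0,1]$, and set $x_\lambda=\lambda x_1+(1-\lambda)x_2$. Given controls $f^1,f^2$ admissible for $x_1,x_2$, put $f^\lambda=\lambda f^1+(1-\lambda)f^2$; this is admissible since $\mathcal{D}_2=[0,\infty)$ is convex and the integrability requirements in \eqref{admissible0} are preserved under convex combinations (the $L^4$-bound by Minkowski, and $\mathbb{E}_{t,x}[U^-(T,\cdot)]<\infty$ because $U^-$ is convex). By linearity of the dynamics in $(x,f)$, the process $\lambda X^{1}+(1-\lambda)X^{2}$ solves the same SDE with initial value $x_\lambda$ and control $f^\lambda$, so $X^{t,x_\lambda,f^\lambda}_T=\lambda X^{1}_T+(1-\lambda)X^{2}_T$. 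Concavity of $U(T,\cdot)$ then gives
\[
\mathbb{E}\!\left[U\!\left(T,X^{t,x_\lambda,f^\lambda}_T\right)\right]\ge \lambda\,\mathbb{E}\!\left[U(T,X^{1}_T)\right]+(1-\lambda)\,\mathbb{E}\!\left[U(T,X^{2}_T)\right].
\]
Bounding the left-hand side by $V^{(1)}(t,x_\lambda)$ and taking the supremum over $f^1,f^2$ on the right produces $V^{(1)}(t,x_\lambda)\ge\lambda V^{(1)}(t,x_1)+(1-\lambda)V^{(1)}(t,x_2)$, which is concavity in $x$.

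The argument is the classical comparison-and-convexity argument for value functions, so no step is genuinely deep; the two points that require care are exactly (a) verifying that admissibility is preserved under the operations used — handled by monotonicity of $U^-$ in the first part and convexity of $U^-$ in the second — and (b) passing from concavity/monotonicity of $V^{(1)}$ to the a.e. pointwise sign of its derivatives, which is where the assumed regularity $V^{(1)}\in W^{2,1}_{p,\mathrm{loc}}$ enters. I expect (b) to be the only spot where one must invoke more than elementary probability, and it is immediate once the value-function representation and its regularity are in hand.
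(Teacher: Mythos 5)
Your proposal is correct and follows essentially the same route as the paper's own proof: both exploit the affine structure of the controlled SDE (the paper via the explicit solution formula $X_T^{x,f}=xe^{r(T-t)}+\int_t^T e^{r(T-s)}(\mu f_s-\delta-\rho)\,ds+\int_t^T e^{r(T-s)}\sigma f_s\,dW_s$, you via the difference equation and linearity in $(x,f)$), then derive monotonicity from the increase of $U$ under a common control and concavity from convexity of $\mathcal{D}_2=[0,\infty)$ together with concavity of $U$ applied to the convex combination of controls. Your additional remarks on preservation of admissibility (via monotonicity and convexity of $U^-$) and on passing from monotonicity/concavity to the a.e.\ signs of $\partial_x V^{(1)}$ and $\partial_{xx}V^{(1)}$ are points the paper leaves implicit, but they do not change the argument.
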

   
   \begin{pro}\label{prop:ge1}
   	Suppose that $V^{(1)}$ is strictly increasing and concave. If \textbf{ER} is not empty, then 
   	$$f^* = -\frac{\mu \partial_xV^{(1)}}{\sigma^2 \partial _{xx}V^{(1)}} > \beta, \text{ for } (t, x) \in \textbf{ER}.$$
   \end{pro}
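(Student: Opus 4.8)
The plan is to argue by contradiction, exploiting the fact that on the expansion region \textbf{ER} the two governing equations must hold at once---the HJB equation \eqref{HJB} for $V^{(1)}$ and the variational inequality \eqref{VI} for $V$---while the only discrepancy between the operators $\mathcal{L}_1^f$ and $\mathcal{L}^f$ is the opportunity-cost term $-\rho\,\partial_x$, which is independent of $f$.

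First I would record the maximizer and the resulting HJB identity. Because $V^{(1)}$ is strictly increasing and strictly concave, $\partial_x V^{(1)}>0$ and $\partial_{xx}V^{(1)}<0$, so the candidate $f^*=-\mu\,\partial_x V^{(1)}/(\sigma^2\partial_{xx}V^{(1)})$ is well defined and strictly positive. Since $\mathcal{L}_1^f V^{(1)}$ is a strictly concave quadratic in $f$ with vertex at $f^*>0$, the constraint $f\ge0$ in \eqref{HJB} is inactive and the maximum is attained at $f^*$; substituting back gives the pointwise identity
\[
-\partial_t V^{(1)}-(rx-\delta-\rho)\,\partial_x V^{(1)}+\frac{\mu^2(\partial_x V^{(1)})^2}{2\sigma^2\,\partial_{xx}V^{(1)}}=0\qquad\text{on }\mathcal{N}_T.
\]

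Next I would transfer this to $V$ on \textbf{ER}. Let $(t,x)$ be an interior point of \textbf{ER}; there $V\equiv V^{(1)}$ on a neighbourhood, so $\partial_t V,\partial_x V,\partial_{xx}V$ agree with the corresponding derivatives of $V^{(1)}$, and the unconstrained maximizer of $\mathcal{L}^f V$ is again $f^*$ (the term $-\rho\,\partial_x$ does not affect the derivative in $f$). Suppose, toward a contradiction, that $f^*\le\beta$ at this point. Then $0<f^*\le\beta$, so the constrained maximum in the second line of \eqref{VI} is also attained at $f^*$ and equals $(rx-\delta)\,\partial_x V^{(1)}-\mu^2(\partial_x V^{(1)})^2/(2\sigma^2\partial_{xx}V^{(1)})$. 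Feeding this together with the HJB identity above into $-\partial_t V-\max_{f\in[0,\beta]}\mathcal{L}^f V\ge0$ makes every term cancel except the cost, leaving $-\rho\,\partial_x V^{(1)}\ge0$. As $\rho>0$ and $\partial_x V^{(1)}>0$ this is impossible, which forces $f^*>\beta$ and proves the claim.

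The step I expect to be the main obstacle is the regularity required to identify the derivatives of $V$ and $V^{(1)}$ throughout \textbf{ER}: at interior points the matching is automatic, but to reach the boundary of the region I would lean on the $W^{2,1}_{p,\mathrm{loc}}\cap C$ regularity of Theorem \ref{thm1} and a smooth-fit argument, ensuring $\partial_x V=\partial_x V^{(1)}$ up to the free boundary so that the contradiction passes to the limit. A secondary point worth checking is that the maximum over $f\ge0$ in \eqref{HJB} is genuinely interior, i.e. $f^*>0$ and $\partial_{xx}V^{(1)}<0$; this is precisely where the strict monotonicity and strict concavity of $V^{(1)}$ enter, since a degenerate second derivative would both void the formula for $f^*$ and destroy the explicit HJB identity.
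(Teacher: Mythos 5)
Your proposal is correct and is essentially the paper's own argument: assume $f^*\le\beta$ on (a subset of) $\textbf{ER}$, note that the maximizers of $\mathcal{L}^f$ and $\mathcal{L}_1^f$ coincide since they differ only by the $f$-independent term $-\rho\,\partial_x$, and combine the VI \eqref{VI} with the HJB equation \eqref{HJB} to obtain $-\rho\,\partial_xV^{(1)}\ge0$, contradicting $\rho>0$ and $\partial_xV^{(1)}>0$. The only difference is cosmetic: the paper substitutes $V=V^{(1)}$ directly on the coincidence set without the interior-point/smooth-fit discussion you add, which is extra care rather than a different method.
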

   \begin{proof}
   	The HJB equation \eqref{HJB}  clearly demonstrates that $f^* = -\frac{\mu \partial_xV^{(1)}}{\sigma^2 \partial _{xx}V^{(1)}} $ when $ (t, x) \in \textbf{ER}$.  Suppose there is a nonempty set $I \subset \textbf{ER}$ and $f^* \in [0, \beta]$ on $I$. From VI \eqref{VI}, we have  
   	$$-\partial_tV^{(1)} -  \left\{  \frac{1}{2}\sigma^2{f^*}^2\partial_{xx} V^{(1)}+ (rx +\mu f^* - \delta)\partial_xV^{(1)} \right\} \geq 0$$
   	on $I$. 
   	However, by \eqref{HJB}, 
   	$$  -\partial_tV^{(1)} - \left\{ \frac{1}{2}\sigma^2{f^*}^2\partial_{xx} V^{(1)}+ (rx +\mu f^* - \delta)\partial_xV^{(1)} \right\}  = -\rho \partial_xV^{(1)}  < 0$$
   	on \textbf{ER}, which leads to a contradiction.  The result follows. 
   \end{proof}
   
   Proposition \ref{prop:ge1} suggests that the optimal control is always larger than $\beta$ after expanding the admissible control set. Thus,  the firm will intuitively set the control to be above $\beta$ so it can take on more risks and expect a higher return after the optimal expansion, as it must pay the opportunity cost and the expansion is irreversible. The optimal decision for the company is therefore to break the original principle after the expansion. 
   
   \subsection{Exponential utility function}
   \label{sec:sol}
   To obtain economic interpretations, we consider the exponential utility that clearly satisfies Assumption \ref{as:U}. We consider
   \begin{equation}\label{Uexponential}
   U(t, x) = - \frac{1}{m}e^{-mx},
   \end{equation}
   where the parameter $m> 0$ is the constant risk aversion parameter. 
   
   Under the exponential utility, the explicit solution to $V^{(1)}$ is the classic solution in the literature, which can be summarized as the following lemma.
   \begin{lemma}\label{lem:V(1)}
   	Under the exponential utility function \eqref{Uexponential}, the explicit solution to $V^{(1)}$ defined in \eqref{V(1)} is given by 
   	\begin{align}\label{V(1)exp}
   	V^{(1)}(t,x) = -\frac{1}{m}\exp\left\{ (-mx -d(t))\exp(r(T-t))  + g(t) \right\},
   	\end{align}
   	where $d(t) = -\frac{(\delta + \rho)m}{r}\left(1 - \exp(-r(T-t)) \right)$ and $g(t) = -\frac{\mu^2}{2\sigma^2}(T- t)$, for $t \in [0, T]$.  In addition, the optimal investment strategy in $\textbf{ER}$ is given by 
   	\begin{equation}
   	f^*_t = \frac{\mu}{\sigma^2 m}\exp(-r(T-t)). 
   	\end{equation}
   \end{lemma}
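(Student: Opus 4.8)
The plan is to solve the HJB equation \eqref{HJB} for $V^{(1)}$ directly by an exponential ansatz, exploiting the fact that after expansion the firm faces an unconstrained ($f\ge0$) exponential-utility maximization with linear surplus dynamics. Guided by the terminal condition $V^{(1)}(T,x)=-\frac{1}{m}e^{-mx}$ and by the linear-quadratic structure of $\mathcal{L}_1^f$, I posit
\[
V^{(1)}(t,x)=-\frac{1}{m}\exp\{h(t)\,x+k(t)\},
\]
with unknown deterministic functions $h,k$ and terminal data $h(T)=-m$, $k(T)=0$. Computing $\partial_xV^{(1)}=hV^{(1)}$, $\partial_{xx}V^{(1)}=h^2V^{(1)}$ and $\partial_tV^{(1)}=(h'x+k')V^{(1)}$, I observe that $V^{(1)}<0$, so that taking $h<0$ yields $\partial_xV^{(1)}>0$ and $\partial_{xx}V^{(1)}<0$, consistent with Lemma \ref{lem0}.

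Next I carry out the pointwise maximization over $f$. Since $\partial_{xx}V^{(1)}<0$, the map $f\mapsto \frac{1}{2}\sigma^2f^2\partial_{xx}V^{(1)}+(rx+\mu f-\delta-\rho)\partial_xV^{(1)}$ is strictly concave, so its unconstrained maximizer is $f^*=-\mu\partial_xV^{(1)}/(\sigma^2\partial_{xx}V^{(1)})=-\mu/(\sigma^2 h)$. Because $h<0$, this gives $f^*>0$, so the constraint $f\ge0$ is inactive and $f^*$ is admissible for $\mathcal{D}_2=[0,\infty)$. Substituting $f^*$ back and completing the square reduces the maximized Hamiltonian to $-\frac{\mu^2}{2\sigma^2}V^{(1)}+(rx-\delta-\rho)\partial_xV^{(1)}$, so the HJB collapses to a linear first-order relation.

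Dividing through by $V^{(1)}\neq0$ and separating the resulting identity into the coefficient of $x$ and the constant term yields two decoupled ODEs,
\[
h'=-rh,\qquad k'=\frac{\mu^2}{2\sigma^2}+(\delta+\rho)h,
\]
with terminal conditions $h(T)=-m$, $k(T)=0$. The first integrates to $h(t)=-m e^{r(T-t)}$; substituting into the second and integrating from $t$ to $T$ gives $k(t)=-\frac{\mu^2}{2\sigma^2}(T-t)+\frac{(\delta+\rho)m}{r}\big(e^{r(T-t)}-1\big)$. A direct rearrangement shows $h(t)x+k(t)=(-mx-d(t))e^{r(T-t)}+g(t)$ with $d,g$ exactly as stated, establishing \eqref{V(1)exp}, and the optimizer simplifies to $f^*_t=-\mu/(\sigma^2 h(t))=\frac{\mu}{\sigma^2 m}e^{-r(T-t)}$.

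I do not expect a genuine obstacle here, as this is the classical exponential-utility reduction; the only point requiring care is the verification that this smooth solution of the HJB equation coincides with the stochastic-control value function \eqref{V(1)}. This follows from a standard verification argument: applying It\^o's formula to $V^{(1)}(s,X_s^{t,f})$ along an arbitrary admissible $f$, using the integrability in \eqref{admissible0} together with condition \ref{condition3'} of Theorem \ref{thm1} to ensure the stochastic integral is a true martingale, one obtains $V^{(1)}(t,x)\ge \mathbb{E}[U(T,X_T^{t,f})]$ for every admissible $f$, with equality attained at $f^*$.
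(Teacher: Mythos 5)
Your proposal is correct and follows essentially the same route as the paper: pointwise maximization in the HJB equation \eqref{HJB}, an exponential ansatz, and reduction to two decoupled linear ODEs with terminal data at $T$. The only cosmetic difference is that you use the generic form $-\frac{1}{m}\exp\{h(t)x+k(t)\}$ and recover the factor $e^{r(T-t)}$ from the ODE $h'=-rh$, whereas the paper builds that factor into its ansatz from the start; your closing verification remark likewise matches what the paper delegates to Theorem \ref{thm1} and Corollary \ref{cor2}.
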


   Proposition \ref{prop:ge1} indicates that it is not optimal to expand the investment once $-\frac{\mu \partial_xV^{(1)}}{\sigma^2 \partial _{xx}V^{(1)}} \leq \beta$. Thus, with exponential utility, Lemma \ref{lem:V(1)} implies that it is never optimal to expand the investment when $ \frac{\mu}{\sigma^2 m} \exp(-r(T-t)) \leq \beta$.  A necessary condition for possible expansion is that $\beta m<\mu/\sigma^2$.   By combining the results in Proposition \ref{prop:ge1} and Lemma \ref{lem:V(1)}, we obtain the following outcome.

   \begin{pro}\label{pro2}
   An expandable business should earn a sufficiently large risk-adjusted return. Specifically, if $\textbf{ER} \neq \emptyset$, then
   	\begin{equation}\label{parameter1}
   	\frac{\mu}{\sigma^2} > \beta m.
   	\end{equation}
   \end{pro}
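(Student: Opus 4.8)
The plan is to obtain the inequality directly from two results already in hand: Proposition~\ref{prop:ge1}, which controls the size of the unconstrained optimal control on the expansion region, and Lemma~\ref{lem:V(1)}, which gives that control in closed form under exponential utility. The economic statement has already been previewed in the discussion immediately preceding the proposition, so the proof merely formalizes that observation.

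First I would fix an arbitrary point $(t_0, x_0) \in \textbf{ER}$, which is available because $\textbf{ER} \neq \emptyset$. To invoke Proposition~\ref{prop:ge1} I must verify its hypothesis that $V^{(1)}$ is strictly increasing and strictly concave. Lemma~\ref{lem0} already provides the weak versions $\partial_x V^{(1)} \ge 0$ and $\partial_{xx} V^{(1)} \le 0$; in the exponential case I would upgrade these to strict inequalities by differentiating the explicit expression in Lemma~\ref{lem:V(1)} once and twice in $x$, noting that both derivatives carry a strictly positive exponential prefactor. With the hypothesis confirmed, Proposition~\ref{prop:ge1} yields
\[
f^*_{t_0} = -\frac{\mu\,\partial_x V^{(1)}}{\sigma^2\,\partial_{xx}V^{(1)}}(t_0, x_0) > \beta.
\]

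Next I would substitute the closed form from Lemma~\ref{lem:V(1)}, namely $f^*_{t_0} = \frac{\mu}{\sigma^2 m}\exp(-r(T-t_0))$, giving $\frac{\mu}{\sigma^2 m}\exp(-r(T-t_0)) > \beta$. Since $r \ge 0$ and $t_0 \le T$, the discount factor satisfies $\exp(-r(T-t_0)) \le 1$, whence
\[
\beta < \frac{\mu}{\sigma^2 m}\exp(-r(T-t_0)) \le \frac{\mu}{\sigma^2 m}.
\]
Multiplying through by $m > 0$ produces $\frac{\mu}{\sigma^2} > \beta m$, which is exactly \eqref{parameter1}.

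The argument is short, so there is no substantive obstacle; the only points requiring care are bookkeeping. The first is confirming that the strictness hypotheses of Proposition~\ref{prop:ge1} actually hold in the exponential setting, since Lemma~\ref{lem0} delivers only weak monotonicity and concavity while the proposition presumes strict versions — this gap is closed using the explicit form. The second is getting the direction of the exponential bound right: one must use that $f^*_t$ is maximized at the terminal time $t = T$, where the discount factor equals one, so that the presence of even a single point of $\textbf{ER}$ forces $\frac{\mu}{\sigma^2 m}$, the supremum of $f^*_t$ over $[0,T]$, to exceed $\beta$.
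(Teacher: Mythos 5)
Your proof is correct and takes essentially the same route as the paper's: both rest on Proposition~\ref{prop:ge1} combined with the explicit strategy $f^1_t = \frac{\mu}{\sigma^2 m}\exp(-r(T-t))$ from Lemma~\ref{lem:V(1)}, together with the observation that this function attains its maximum $\frac{\mu}{\sigma^2 m}$ at $t=T$. The only cosmetic difference is that you argue directly from a point of $\textbf{ER}$ while the paper argues by contraposition (if $\frac{\mu}{\sigma^2}\le\beta m$ then $f^1_t\le\beta$ for all $t$, so expansion is never optimal), and you additionally spell out the upgrade of Lemma~\ref{lem0}'s weak inequalities to the strict hypotheses of Proposition~\ref{prop:ge1}, a detail the paper leaves implicit.
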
	
   \begin{proof}
   	Clearly, $f^1_t \triangleq \frac{\mu}{\sigma^2 m} \exp(-r(T-t))$ is an increasing function in $t$.  If $\frac{\mu}{\sigma^2} \leq \beta m$,  $f_T^1 = \frac{\mu}{\sigma^2 m} \leq \beta$, and it is never optimal to expand investment by applying Proposition \ref{prop:ge1}.  Hence, \eqref{parameter1} holds if $\textbf{ER} \neq \emptyset$. 
   \end{proof}
   This finding is economically appealing as it demonstrates that the firm will not consider expansion if the business does not earn enough return. 
   Furthermore, a firm is less likely to expand the admissible control set if it is more risk-averse or the trading limit $\beta$ is less tight. 
   
   \begin{lemma}\label{lem:CR}
   An expandable business does not expand until $t_1$, where
   	\begin{equation}\label{t1}
   	t_1 := \inf\{t\in [0, T]: \frac{\mu}{\sigma^2 m} \exp(-r(T-t)) \geq \beta\} \geq 0
   	\end{equation}
   under condition \eqref{parameter1}. In other words,
    $ \{(t,x): 0 \leq t \leq t_1\} \subseteq \textbf{CR}$. 
   \end{lemma}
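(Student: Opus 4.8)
The plan is to recast the assertion $\{(t,x): 0\le t\le t_1\}\subseteq \textbf{CR}$ as the statement that the expansion region $\textbf{ER}$ meets no time slice with $t\le t_1$, and then to rule this out via the contrapositive of Proposition \ref{prop:ge1}. The first ingredient I would make explicit is that $\mathcal{N}_T=\textbf{CR}\sqcup\textbf{ER}$: since choosing $\tau=t$ (expanding at once) is an admissible strategy in \eqref{admissible}, we have $V(t,x)\ge V^{(1)}(t,x)$ everywhere, so ``$(t,x)\notin\textbf{ER}$'' is exactly ``$(t,x)\in\textbf{CR}$.'' It therefore suffices to show $\textbf{ER}=\emptyset$ on the region $\{t\le t_1\}$.

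Next I would substitute the explicit solution of Lemma \ref{lem:V(1)} into Proposition \ref{prop:ge1}. The candidate control $-\mu\partial_xV^{(1)}/(\sigma^2\partial_{xx}V^{(1)})$ equals $f^1_t:=\frac{\mu}{\sigma^2 m}\exp(-r(T-t))$, which depends on $t$ alone and not on $x$; this $x$-independence is precisely what lets me argue slice by slice in $t$. Proposition \ref{prop:ge1} then reads: $(t,x)\in\textbf{ER}\Rightarrow f^1_t>\beta$. I would then analyze $f^1_t$ as a function of $t$: it is continuous and strictly increasing (as already observed in the proof of Proposition \ref{pro2}), and under \eqref{parameter1} it satisfies $f^1_T=\mu/(\sigma^2m)>\beta$. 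Hence $\{t\in[0,T]: f^1_t\ge\beta\}=[t_1,T]$ with $t_1$ as in \eqref{t1}, and by monotonicity $f^1_t<\beta$ for all $t<t_1$, while continuity forces $f^1_{t_1}=\beta$ whenever $t_1>0$; in particular $f^1_t\le\beta$ for every $t\le t_1$.

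Combining the pieces, for each $(t,x)$ with $t\le t_1$ we have $f^1_t\le\beta$, so the contrapositive of Proposition \ref{prop:ge1} forces $(t,x)\notin\textbf{ER}$, i.e. $(t,x)\in\textbf{CR}$, which is the claimed inclusion. The step needing the most care is the boundary $t=t_1$, where $f^1_{t_1}=\beta$: here it is essential that Proposition \ref{prop:ge1} delivers the \emph{strict} inequality $f^*>\beta$, so that the equality $f^1_{t_1}=\beta$ already excludes membership in $\textbf{ER}$. I expect the only genuinely delicate point to be the degenerate case $t_1=0$ (which occurs exactly when $f^1_0\ge\beta$): the contrapositive covers it only at the equality $f^1_0=\beta$, so away from that one should read the statement in its intended regime $t_1>0$, where the business strictly waits before expanding and the argument above applies verbatim.
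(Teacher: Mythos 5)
Your proof is correct and follows essentially the same route as the paper's own: both arguments use the monotonicity and continuity of $f^1_t=\frac{\mu}{\sigma^2 m}\exp(-r(T-t))$ under condition \eqref{parameter1} to get $f^1_t\le\beta$ on $[0,t_1]$, and then invoke (the contrapositive of) Proposition \ref{prop:ge1} to exclude $\textbf{ER}$ there. Your extra care — justifying that $V\ge V^{(1)}$ so that non-membership in $\textbf{ER}$ really means membership in $\textbf{CR}$, handling the boundary slice $t=t_1$ via the strictness in Proposition \ref{prop:ge1}, and flagging the degenerate case $t_1=0$ with $f^1_0>\beta$ (where the inclusion can in fact fail, as in Case 1 where immediate expansion is optimal) — are refinements that the paper's proof silently glosses over.
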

   \begin{proof}
   	Under the condition \eqref{parameter1}, $f^1_T = \frac{\mu}{\sigma^2m}> \beta$. Thus, there exists $t \in [0, T]$ such that $f^1_t \geq \beta$ and  $t_1$ in \eqref{t1} is well-defined. From the definition of $t_1$, $f_t^1 \in [0, \beta]$ when $t \in [0, t_1]$. The result follows from Proposition \ref{prop:ge1}. 
   \end{proof}
   \begin{remark}\label{rem:largem}
   	If condition \eqref{parameter1} does not hold, $t_1$ defined in \eqref{t1} does not exist, and it is never optimal to expand the investment before the terminal time $T$.
   \end{remark}
   \subsubsection{An auxiliary problem}
   \label{subsec:Simple}
   Although the explicit solution to $V^{(1)}$ can be derived for an exponential utility, it is highly non-trivial to characterize the exercise boundary from the VI \eqref{VI} because of the bounded constraint on $f$ in $\textbf{CR}$. We address this by considering a simplified investment expansion problem as an auxiliary problem, which we can then link to the original problem. 
   
   If $\textbf{ER} \neq \emptyset$, Proposition \ref{prop:ge1} implies that the optimal investment strategy $f^*_t> \beta$ for $(t, x)\in \textbf{ER}$. In addition,  Lemma \ref{lem:CR} ensures that it is not optimal to expand the investment before time $t_1$, i.e., $t_1\le \tau^*$.  Thus, $f^*_t\in [0, \beta]$ for all $t<t_1$.  What happens to the strategy $f^*_t$ for $t\in[t_1, \tau^*)$? A natural conjecture is that $f^* \equiv \beta$ for all $t \in [t_1, \tau^*]$. We therefore construct an auxiliary problem on $\widetilde{\mathcal{N}}_T = [t_1, T] \times \mathbb{R}$ based on this conjecture, as follows.

   In the auxiliary problem, we assume that the control $f$ is fixed at $\beta$ before the extending time $\tau$, and takes its value from the positive real line after $\tau$. 
   Under condition \eqref{parameter1},  for any fixed $t\in [t_1,T]$, the set of admissible pairs of extending time and investment strategy   for the auxiliary problem is given by
   \begin{align}
   \tilde{\mathcal{A}}_{\tau}^{t}:=&\big\{(\tau, f) \in \tilde{\mathcal{A}}_0(t, x),  \text{ and } f_s: \Omega \times [t,T] \mapsto \mathbb{R}, \mathcal{F}_s\text{-adapted}, \cr 
   &\text{ with } f_s \equiv \beta, \text{ for } s\in[t,\tau];~ f_s \in [0,+\infty),\text{ for } s\in[\tau,T]\big\}, 
   \end{align}
   where $\tilde{\mathcal{A}}_0(t,x)$ is the set of the controls satisfying
   \begin{align*}
   \int_{t_1}^{T}\| f_s\|^4 ds < \infty  \text{  a.s. in } \Omega, ~ \tau \in \mathcal{T}_{t, T}, ~ \mathbb{E}_{t,x}\left[U^-(T, X_T^{\tau, f}) \right] < + \infty, 
   \end{align*}
   with $U^- = \max\{0, -U\}$. 
   Define the new value function for the auxiliary problem:
   \begin{align}\label{Vtilde}
   \widetilde{V}(t,x) = \sup_{(\tau, f) \in \tilde{\mathcal{A}}_{\tau}^t } \mathbb{E}\left[ U (T, X_T^{\tau, f})  \right], 
   \end{align}
   for $(t, x) \in [t_1, T]\times \mathbb{R}$. 
   Using the dynamic programming principle, we derive the VI for $\widetilde{V}$: 
   \begin{align}\label{VItilde}
   \left\{\begin{array}{ll}
   -\partial_t\widetilde{V}  - \widetilde{\mathcal{L}}\widetilde{V} = 0, &\text{if } \widetilde{V} > V^{(1)} \text{ and } (t, x) \in \widetilde{\mathcal{N}}_T;\\
   -\partial_t\widetilde{V} - \widetilde{\mathcal{L}}\widetilde{V} \geq 0, & \text{if } \widetilde{V} = V^{(1)} \text{ and } (t, x) \in \widetilde{\mathcal{N}}_T;\\
   \widetilde{V}(T, x) = U(T, x), & \forall x \in \mathbb{R}, 
   \end{array}
   \right.
   \end{align}
   where 
   $\widetilde{\mathcal{L}} \triangleq \frac{1}{2}\beta^2\sigma^2\partial_{xx} + (rx + \beta\mu - \delta)\partial_x$, and $V^{(1)}$ given in Lemma \ref{lem:V(1)} solves the HJB equation \eqref{HJB}.
   For the auxiliary  problem \eqref{Vtilde}, we denote 
   $$\widetilde{\textbf{CR}} := \{(t, x): \widetilde{V}(t, x) > V^{(1)}(t, x)\}, ~ \widetilde{\textbf{ER}} := \{(t, x): \widetilde{V}(t, x) = V^{(1)}(t, x)\}$$
   as the continuation and expansion regions, respectively. 
   
   To characterize the optimal exercise boundary from \eqref{VItilde}, we set 
   $$P = \widetilde{V} - V^{(1)}.$$
   Observe that 
   \begin{align*}
   0 & = \partial_tV^{(1)} + \max_{f \ge 0} \left\{ \mathcal{L}_1^fV^{(1)}\right\} = \partial_tV^{(1)} + (rx + \mu f^1 - \delta - \rho)\partial_xV^{(1)} + \frac{1}{2} \sigma^2(f^1)^2\partial_{xx}V^{(1)}\\
   & =  \partial_tV^{(1)} + (rx + \beta\mu - \delta )\partial_xV^{(1)} + \frac{1}{2}\beta^2\sigma^2\partial_{xx}V^{(1)} \\
   & + \left[(f^1 - \beta)\mu -\rho\right]\partial_{x}V^{(1)} + \frac{1}{2}\sigma^2[(f^1)^2 - \beta^2]\partial_{xx}V^{(1)}\\
   & = \partial_tV^{(1)} + \widetilde{\mathcal{L}}V^{(1)} + \left[(f^1 - \beta)\mu -\rho\right]\partial_{x}V^{(1)} + \frac{1}{2}\sigma^2[(f^1)^2 - \beta^2]\partial_{xx}V^{(1)},
   \end{align*}
   where $f^1 = \frac{\mu}{\sigma^2 m}\exp(-r(T-t))$ by Lemma \ref{lem:V(1)}. $P$ then satisfies
   \begin{align}\label{VI:P}
   \left\{
   \begin{array}{ll}
   \partial_tP + \widetilde{\mathcal{L}}P =  \left[(f^1 - \beta)\mu -\rho\right]\partial_{x}V^{(1)} + \frac{1}{2}\sigma^2[(f^1)^2 -\beta^2]\partial_{xx}V^{(1)}, & \text{if } P> 0 \text{ and } (t, x) \in \widetilde{\mathcal{N}}_T;\\
   \partial_tP + \widetilde{\mathcal{L}}P \leq \left[(f^1 - \beta)\mu -\rho\right]\partial_{x}V^{(1)} + \frac{1}{2}\sigma^2[(f^1)^2 - \beta^2]\partial_{xx}V^{(1)}, & \text{if } P =0 \text{ and } (t, x) \in \widetilde{\mathcal{N}}_T; \\
   P(T, x) = 0, & \forall x \in \mathbb{R}.
   \end{array}
   \right.
   \end{align}
   By Lemma \ref{lem:V(1)}, 
   \begin{align*}
   &\left[(f^1 - \beta)\mu -\rho\right]\partial_{x}V^{(1)} + \frac{1}{2}\sigma^2[(f^1)^2 - \beta^2]\partial_{xx}V^{(1)}\\
   &= \left[\frac{\mu^2}{2\sigma^2 m} + \frac{1}{2}\beta^2\sigma^2 m\exp(2r(T-t)) - (\beta\mu + \rho)\exp(r(T-t))\right]\\
   &\exp\left\{ \left(-mx + \frac{(\delta + \rho)m}{r}\left(1 - \exp(-r(T-t)) \right)\right)\exp(r(T-t))  -\frac{1}{2}\frac{\mu^2}{\sigma^2}(T- t) \right\}\\
   & = h(t)\exp\left\{ \left(-mx + \frac{(\delta + \rho)m}{r}\left(1 - \exp(-r(T-t)) \right)\right)\exp(r(T-t))  -\frac{1}{2}\frac{\mu^2}{\sigma^2}(T- t) \right\}, 
   \end{align*}
   where 
   \begin{equation}\label{h(t)}
   h(t) \triangleq \frac{\mu^2}{2\sigma^2 m} + \frac{1}{2}\beta^2\sigma^2 m\exp(2r(T-t)) -(\beta\mu + \rho)\exp(r(T-t)). 
   \end{equation}
   
   \begin{pro}\label{pro:cond2}
   	If $\widetilde{\textbf{ER}} \neq \emptyset$, then
   	\begin{equation}\label{parameter2}
   \rho \leq \frac{\mu^2}{2\sigma^2 m} + \frac{1}{2}\beta^2\sigma^2 m - \beta\mu. 
   	\end{equation}
   \end{pro}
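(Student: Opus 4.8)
My plan is to reduce the statement to a sign condition on the function $h$ defined in \eqref{h(t)} and to exploit the stochastic representation of $P=\widetilde V-V^{(1)}$ that is implicit in the derivation preceding \eqref{VI:P}. Write the right-hand side of \eqref{VI:P} as $h(t)\,g(t,x)$, where $g(t,x)>0$ is the exponential factor displayed just below \eqref{VI:P}; the rearrangement given there is exactly the identity $(\partial_t+\widetilde{\mathcal L})V^{(1)}=-h(t)\,g(t,x)$. Since in the auxiliary problem the control is frozen at $\beta$ before the extending time, $\widetilde V(t,x)=\sup_{\tau\in\mathcal{T}_{t, T}}\mathbb E_{t,x}[V^{(1)}(\tau,X^\beta_\tau)]$, where $X^\beta$ has generator $\widetilde{\mathcal L}$. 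Applying Dynkin's formula to $V^{(1)}$ along $X^\beta$ and using the identity above, I obtain
\[ P(t,x)=\sup_{\tau\in\mathcal{T}_{t, T}}\mathbb E_{t,x}\!\left[\int_t^\tau \big(-h(s)\big)\,g(s,X^\beta_s)\,ds\right], \]
in which $g>0$ throughout. Thus the sign of the integrand is governed entirely by $-h(s)$, and the expansion decision is driven by where $h$ is nonnegative.

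The crucial auxiliary fact is that $h$ is monotone on the relevant interval. Writing $y=\exp(r(T-t))\in[1,y_1]$ with $y_1=\mu/(\beta\sigma^2 m)$ (so that $t\in[t_1,T]$, recalling \eqref{t1}), the expression \eqref{h(t)} becomes the upward parabola $h=\tfrac{\mu^2}{2\sigma^2 m}+\tfrac12\beta^2\sigma^2 m\,y^2-(\beta\mu+\rho)y$, whose vertex sits at $y_v=(\beta\mu+\rho)/(\beta^2\sigma^2 m)$. A one-line computation gives $y_1<y_v$ (their difference is $\rho/(\beta^2\sigma^2 m)>0$), so $h$ is strictly decreasing in $y$ on $[1,y_1]$ and hence strictly increasing in $t$ on $[t_1,T]$. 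Consequently $h(T)=\max_{t\in[t_1,T]}h(t)$, and evaluating \eqref{h(t)} at $t=T$ identifies $h(T)=\tfrac{\mu^2}{2\sigma^2 m}+\tfrac12\beta^2\sigma^2 m-\beta\mu-\rho$, so that \eqref{parameter2} is precisely the statement $h(T)\ge0$.

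I would then argue by contraposition. Suppose \eqref{parameter2} fails, i.e. $h(T)<0$. By the monotonicity just established, $-h(t)>0$ for every $t\in[t_1,T]$. Fix any $(t_0,x_0)$ with $t_0\in[t_1,T)$ and take in the representation the stopping time $\tau=(t_0+\varepsilon)\wedge T$ for small $\varepsilon>0$; since $g>0$ and $-h>0$ on $(t_0,\tau)$, the expectation is strictly positive, whence $P(t_0,x_0)>0$ and $(t_0,x_0)\in\widetilde{\textbf{CR}}$. Therefore $\widetilde{\textbf{ER}}$ contains no point with $t<T$, contradicting $\widetilde{\textbf{ER}}\neq\emptyset$, and the result follows. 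Equivalently, and more in the style of Proposition \ref{prop:ge1}, on the interior of $\widetilde{\textbf{ER}}$ one has $P\equiv0$, so all its derivatives vanish and the second line of \eqref{VI:P} reduces to $0\le h(t_0)\,g$, giving $h(t_0)\ge0$; the monotonicity then upgrades this to $h(T)\ge h(t_0)\ge0$.

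The routine computations (the Dynkin expansion, the endpoint value of $h$, and the vertex comparison) I expect to be unproblematic. The main obstacle is the rigorous justification of the stochastic representation of $P$: this requires the integrability and growth estimates that legitimize Dynkin's formula and optional stopping for $V^{(1)}$ along $X^\beta$ (the analogue of condition \ref{condition3'}), together with the verification that the value function \eqref{Vtilde} coincides with the Snell-type envelope above. The alternative VI argument sidesteps these estimates but must instead assume that $\widetilde{\textbf{ER}}$, when nonempty, possesses an interior point with $t<T$ so that the vanishing of the derivatives of $P$ is available; reconciling the two — so that ``$\widetilde{\textbf{ER}}\neq\emptyset$'' is read as the existence of a genuine pre-terminal expansion point — is the one place where care is needed.
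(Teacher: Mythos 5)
Your proposal is correct, and its primary argument takes a genuinely different route from the paper's. The paper works pointwise on the variational inequality \eqref{VI:P}: since $P=\widetilde V-V^{(1)}\ge 0$ with $P\equiv 0$ on $\widetilde{\textbf{ER}}$, the derivatives of $P$ drop out there, so the second line of \eqref{VI:P} forces $0\le h(t)g(t,x)$, hence $h(t)\ge 0$ at any expansion point; combined with the monotonicity of $h$ on $[t_1,T]$ (which the paper dispatches as ``a simple analysis of $h(t)$''), this gives $h(T)\ge 0$, which is exactly \eqref{parameter2}. That is precisely your secondary, ``equivalently''-style argument. Your main argument instead builds the Snell-type representation $P(t,x)=\sup_{\tau}\mathbb E_{t,x}\left[\int_t^\tau\bigl(-h(s)\bigr)g\bigl(s,X^\beta_s\bigr)ds\right]$ from Dynkin's formula and the identity $(\partial_t+\widetilde{\mathcal L})V^{(1)}=-h\,g$, and then argues by contraposition: if $h(T)<0$, monotonicity makes $-h>0$ on all of $[t_1,T]$, so $P>0$ at every pre-terminal point. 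What each approach buys: the paper's route is shorter and purely local, but it needs points of $\widetilde{\textbf{ER}}$ at which the derivatives of $P$ can legitimately be evaluated; yours requires the integrability estimates that justify Dynkin's formula and optional stopping (unproblematic here, since $V^{(1)}$ is the explicit exponential of Lemma \ref{lem:V(1)} and $X^\beta$ has bounded deterministic coefficients), and in return delivers the stronger quantitative conclusion that the entire pre-terminal region is continuation when \eqref{parameter2} fails. Your explicit vertex comparison, $y_v-y_1=\rho/(\beta^2\sigma^2 m)\ge 0$, also supplies the monotonicity step that the paper leaves implicit. Finally, both proofs share the same degeneracy, which you correctly flag and the paper glosses over: $\widetilde V=V^{(1)}$ on the terminal line, so $\widetilde{\textbf{ER}}$ is never literally empty, and the hypothesis must be read as the existence of an expansion point with $t<T$.
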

   \begin{proof}
   	Let $a(t) \triangleq \exp(r(T-t))$, then $h(t) =  \frac{1}{2}\beta^2\sigma^2 m a(t)^2 -(\beta\mu + \rho)a(t) + \frac{\mu^2}{2\sigma^2 m}$ is a quadratic function with respect to $a(t)$. As $a(t)$ is a decreasing function in terms of $t$, from the definition of $t_1$, $a(t) \leq \frac{\mu}{\sigma^2 \beta m}$ when $t \geq t_1$.  Through a simple analysis of $h(t)$, we know that $h(t)$ is an increasing function in $t$ for $t\in [t_1, T]$.  
   	
   	By the VI \eqref{VI:P},  if $P = 0$, we have
   	$$0 \leq h(t)\exp\left\{ \left(-mx + \frac{(\delta + \rho)m}{r}\left(1 - \exp(-r(T-t)) \right)\right)\exp(r(T-t))  -\frac{\mu^2}{2\sigma^2}(T- t) \right\}.$$
   	Thus, $h(t) \geq 0$ when $(t, x)\in \widetilde{\textbf{ER}}$.  It requires $h(T) = 	\frac{\mu^2}{2\sigma^2 m} + \frac{1}{2}\beta^2\sigma^2 m - (\beta\mu + \rho) \geq 0$, if $\widetilde{\textbf{ER}} \neq \emptyset$.  
   \end{proof}
   Based on the above analysis, the following lemma is immediate. 
   \begin{lemma}\label{lemma:t2}
   	Under the conditions \eqref{parameter1} and \eqref{parameter2}, define
   	\begin{equation}\label{t2}
   	t_2 = \inf\{t\in [t_1, T]: h(t) \geq 0\} \geq t_1. 
   	\end{equation}
   	Then, $\{(t, x): t_1\leq t \leq t_2\} \subseteq \widetilde{\textbf{CR}}$. 
   \end{lemma}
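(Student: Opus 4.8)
The plan is to reduce the statement to the strict positivity of $P := \widetilde{V} - V^{(1)}$ on the strip $\{(t,x): t_1 \le t < t_2\}$, and to obtain this from the optimal stopping representation of the auxiliary value function \eqref{Vtilde}. First I would record two facts already available. Since the constant rule $\tau = t$ (immediate expansion) is admissible in $\tilde{\mathcal{A}}_\tau^t$ and yields exactly $V^{(1)}(t,x)$, we have $\widetilde{V} \ge V^{(1)}$, i.e. $P \ge 0$ throughout $\widetilde{\mathcal{N}}_T$; and from the proof of Proposition \ref{pro:cond2}, the function $h$ in \eqref{h(t)} is increasing on $[t_1,T]$, so the definition \eqref{t2} of $t_2$ gives $h(t) < 0$ for every $t \in [t_1, t_2)$. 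It therefore suffices to show $P(t,x) > 0$ whenever $h(t) < 0$.

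For this I would use the representation underlying the dynamic programming derivation of \eqref{VItilde}: writing $X^{\beta}$ for the surplus generated by the frozen control $f \equiv \beta$ (whose generator is $\widetilde{\mathcal{L}}$) and using the tower property together with the fact that the continuation value after the extending time is precisely $V^{(1)}$, one has
\[
\widetilde{V}(t,x) = \sup_{\tau \in \mathcal{T}_{t,T}} \mathbb{E}_{t,x}\!\left[V^{(1)}(\tau, X^{\beta}_\tau)\right].
\]
Because $V^{(1)}$ from Lemma \ref{lem:V(1)} is smooth, Dynkin's formula applied to $V^{(1)}(\cdot, X^{\beta}_\cdot)$, together with the identity $\partial_s V^{(1)} + \widetilde{\mathcal{L}}V^{(1)} = -h(s)\,E(s,\cdot)$ obtained by rearranging the relation $0 = \partial_t V^{(1)} + \widetilde{\mathcal{L}}V^{(1)} + h(t)\,E$ derived just before Proposition \ref{pro:cond2} (with $E(s,x) > 0$ the common positive exponential factor there), gives
\[
P(t,x) = \sup_{\tau \in \mathcal{T}_{t,T}} \mathbb{E}_{t,x}\!\left[\int_t^{\tau} \big(-h(s)\big)\,E(s, X^{\beta}_s)\,ds\right].
\]
Fixing $t \in [t_1, t_2)$ and taking the deterministic admissible time $\tau \equiv t_2$, the integrand is strictly positive on $[t, t_2)$ since $h(s) < 0$ there and $E > 0$; hence the expectation is strictly positive and $P(t,x) > 0$. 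This proves $\{(t,x): t_1 \le t < t_2\} \subseteq \widetilde{\textbf{CR}}$, while the endpoint $t = t_2$ lies on the free boundary (there $-h \le 0$ on $[t_2,T]$, so $\tau = t_2$ is optimal and $P = 0$), which is consistent with reading the inclusion up to the exercise boundary.

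An alternative, and the route suggested by the word ``immediate,'' is a direct contradiction argument on \eqref{VI:P}: if some $(t_0,x_0)$ with $t_0 < t_2$ lay in $\widetilde{\textbf{ER}}$, then $P \equiv 0$ near it would force $\partial_t P + \widetilde{\mathcal{L}}P = 0$, whereas the second line of \eqref{VI:P} demands $0 \le h(t_0)\,E < 0$, a contradiction. The obstacle with this shorter argument — and the reason I would lead with the stopping representation — is that it only controls the interior of $\widetilde{\textbf{ER}}$, where all derivatives of $P$ vanish; it does not by itself rule out boundary points of $\widetilde{\textbf{ER}}$ sitting in $\{t < t_2\}$. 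The representation above closes this gap by bounding $P$ strictly below, and its only real cost is the routine integrability/verification check (exponential growth of $V^{(1)}$ against the Gaussian law of $X^{\beta}$, in the spirit of condition \ref{condition3'} of Theorem \ref{thm1}) needed to justify Dynkin's formula and the martingale property of the stochastic integral.
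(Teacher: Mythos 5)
Your proof is correct, but it is not the paper's route. The paper gives no standalone proof of this lemma at all: it is declared ``immediate'' from the analysis preceding it, which is precisely the short variational-inequality argument you relegate to your last paragraph --- the proof of Proposition \ref{pro:cond2} shows $h(t)\ge 0$ wherever $P=0$, $h$ is increasing on $[t_1,T]$, and $h(t)<0$ on $[t_1,t_2)$ by the definition \eqref{t2}, so every such point must lie in $\widetilde{\textbf{CR}}$. Your main argument --- the reduction $\widetilde{V}(t,x)=\sup_{\tau}\mathbb{E}_{t,x}\bigl[V^{(1)}(\tau,X^{\beta}_{\tau})\bigr]$, Dynkin's formula, and the identity $\partial_t V^{(1)}+\widetilde{\mathcal{L}}V^{(1)}=-h(t)E(t,x)$, giving $P(t,x)=\sup_{\tau}\mathbb{E}_{t,x}\bigl[\int_t^{\tau}(-h(s))E(s,X^{\beta}_s)\,ds\bigr]$ and strict positivity via $\tau\equiv t_2$ --- is a genuinely different, probabilistic proof. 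It buys more than the paper's: a quantitative strict lower bound for $P$ on the strip, no reliance on the pointwise status of \eqref{VI:P} (for a strong $W^{2,1}_p$ solution, statements like ``$\partial_tP+\widetilde{\mathcal{L}}P=0$ where $P=0$'' hold only a.e., so your worry about points of $\widetilde{\textbf{ER}}$ that are not interior, or that form a null set, is a fair one, though it can also be patched by noting that any point of $\widetilde{\textbf{ER}}$ is a global minimum of $P\ge 0$), and it anticipates the closed form of $P$ verified later in Theorem \ref{thm:tautilde}. Its cost is exactly the verification overhead you flag: justifying the tower-property reduction and the martingale property of the stochastic integral against the exponential growth of $V^{(1)}$, all routine here since $E(s,X^{\beta}_s)$ is an exponential of a Gaussian. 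One last point in your favor: at the endpoint $t=t_2$ one has $P=0$ by Theorem \ref{thm:tautilde}, so $(t_2,x)\in\widetilde{\textbf{ER}}$ and the closed strip claimed in the lemma is literally too large; neither argument (nor the paper's) proves the inclusion at $t=t_2$, and your reading of that slice as the free boundary is the correct gloss on this small inaccuracy in the statement.
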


   \begin{theorem}\label{thm:tautilde}
   	Under the conditions \eqref{parameter1} and \eqref{parameter2}, the explicit solution to VI \eqref{VI:P} is given by
   	\begin{align}\label{P}
   	P(t, x) =\left\{
   	\begin{array}{lr}
   	-(1- e^{\int_{t}^{t_2}mh(s)ds})V^{(1)}(t,x), \text{ if } t \in [t_1, t_2),\\
   	0, ~ \text{if } t \in [t_2, T], 
   	\end{array}
   	\right.
   	\end{align}
   	where $h(t)$ is defined by \eqref{h(t)}, and $V^{(1)}(t, x)$ is given in Lemma \ref{lem:V(1)}. In addition, $P(t, x)$, $\partial_t P(t, x)$, and $\partial_x P(t, x)$ are continuous functions for $(t, x) \in [t_1, T]\times \mathbb{R}$. The exercise boundary of Problem \eqref{Vtilde} is $ \tau^* = t_2$. 
   \end{theorem}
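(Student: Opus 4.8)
The plan is to verify directly that the stated $P$ solves the free-boundary problem \eqref{VI:P}, exploiting the fact that the source term on its right-hand side is itself proportional to $V^{(1)}$. Indeed, the computation preceding \eqref{h(t)} shows that the exponential factor appearing there is exactly $-mV^{(1)}(t,x)$, so that source term equals $G(t,x):=-m\,h(t)\,V^{(1)}(t,x)$. Reading the same chain of equalities backwards yields the identity $\partial_t V^{(1)}+\widetilde{\mathcal L}V^{(1)}=-G=m\,h(t)\,V^{(1)}$, which says that $V^{(1)}$ is an ``eigenfunction'' of the operator $\partial_t+\widetilde{\mathcal L}$ up to the time-dependent weight $m\,h(t)$. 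This is the structural observation that makes the problem separable and drives the whole argument.

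Motivated by this, I would first seek a solution in $\widetilde{\textbf{CR}}$ of the separated form $P(t,x)=\phi(t)\,V^{(1)}(t,x)$, so that all spatial dependence is carried by $V^{(1)}$. Substituting and using the eigen-identity, the first line of \eqref{VI:P} collapses to the scalar linear ODE $\phi'(t)+m\,h(t)\,\phi(t)=-m\,h(t)$. The natural terminal datum is $\phi(t_2)=0$, which is what lets $P$ glue continuously to the region $\{t\ge t_2\}$ where $P\equiv0$; integrating with this condition gives $\phi(t)=e^{\int_t^{t_2}mh(s)\,ds}-1$, i.e.\ precisely the prefactor $-(1-e^{\int_t^{t_2}mh(s)ds})$ claimed in \eqref{P}.

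Next I would confirm that this candidate satisfies the variational inequality on all of $\widetilde{\mathcal N}_T$. On $[t_1,t_2)$ the PDE holds by construction, so the content is the complementarity sign $P>0$: since $h$ is increasing with $h(t_2)=0$ (Lemma \ref{lemma:t2} and the monotonicity established in Proposition \ref{pro:cond2}), we have $h<0$ on $[t_1,t_2)$, hence $\int_t^{t_2}mh(s)\,ds<0$ and $\phi(t)<0$; together with $V^{(1)}<0$ this gives $P=\phi V^{(1)}>0$. On $[t_2,T]$ one has $P\equiv0$, so $\partial_t P+\widetilde{\mathcal L}P=0$, and the required inequality reduces to $0\le G=-m\,h(t)\,V^{(1)}$, which holds because $h\ge0$ there while $V^{(1)}<0$. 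The terminal condition $P(T,x)=0$ is immediate.

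Finally I would settle the regularity claim and read off the boundary. Continuity of $P$ across $t_2$ follows from $\phi(t_2)=0$, and $\partial_x P=\phi(t)\,\partial_x V^{(1)}$ inherits continuity for the same reason. The delicate point is $\partial_t P$: differentiating $P=\phi V^{(1)}$ and using $\phi(t_2)=0$ together with $\phi'(t_2)=-m\,h(t_2)(\phi(t_2)+1)=-m\,h(t_2)=0$ (again because $h(t_2)=0$) shows $\partial_t P(t_2^-,x)=0$, matching the vanishing $\partial_t P$ on $[t_2,T]$. I expect this automatic smooth fit to be the crux of the argument, since it is exactly what certifies that the \emph{deterministic} time $t_2$, rather than some state-dependent curve, is the genuine exercise boundary; with it in hand one concludes $\widetilde{\textbf{CR}}=\{t<t_2\}$, $\widetilde{\textbf{ER}}=\{t\ge t_2\}$, and hence $\tau^*=t_2$ (the identification of this VI solution with $\widetilde V-V^{(1)}$ then following from the verification theorem applied to the auxiliary problem).
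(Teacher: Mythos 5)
Your proposal is correct and takes essentially the same route as the paper: both rest on the identity $\partial_t V^{(1)} + \widetilde{\mathcal{L}}V^{(1)} = m\,h(t)\,V^{(1)}$ (so that $\partial_t P + \widetilde{\mathcal{L}}P = -m\,h(t)\,V^{(1)}$ on $[t_1,t_2)$), check the inequality on $[t_2,T]$ using $h\ge 0$ and $V^{(1)}<0$, and obtain continuity of $\partial_t P$ at $t_2$ from $h(t_2)=0$; your separation-of-variables derivation $P=\phi(t)V^{(1)}$ with the scalar ODE $\phi'+m h\phi=-mh$ is just a constructive repackaging of the paper's direct plug-in verification. One cosmetic point: $h(t_2)=0$ holds only when $t_1<t_2$; in the degenerate case $t_1=t_2$ (which the paper treats separately as its case (i)) the region $[t_1,t_2)$ is empty, $P\equiv 0$, and the smooth-fit step is vacuous, so your argument still goes through.
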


   As $\widetilde{V} = V + P$, the explicit form of $\widetilde{V}$ is immediate. 
   \begin{cor}\label{corVtilde}
   	The explicit solution to VI \eqref{VItilde} is given by 
   	\begin{align}
   	\widetilde{V}(t, x) =\left\{
   	\begin{array}{lr}
   	-\frac{1}{m}\exp\left\{ \int_{t}^{t_2}mh(s)ds + (-mx -d(t))\exp(r(T-t))  + g(t) \right\}, \text{ if } t \in [t_1, t_2),\\
   	-\frac{1}{m}\exp\left\{ (-mx -d(t))\exp(r(T-t))  + g(t) \right\},  ~ \text{if } t \in [t_2, T],
   	\end{array}
   	\right.
   	\end{align}
   	where $d(t)$ and $g(t)$ are given in Lemma \ref{lem:V(1)}.
   \end{cor}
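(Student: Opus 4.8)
The plan is to read off $\widetilde{V}$ from the identity $\widetilde{V}=V^{(1)}+P$, which is simply a rearrangement of the definition $P=\widetilde{V}-V^{(1)}$ introduced just before \eqref{VI:P}. Since Theorem \ref{thm:tautilde} already delivers the closed form of $P$ and Lemma \ref{lem:V(1)} delivers the closed form of $V^{(1)}$, the corollary reduces to substituting these two expressions into $\widetilde{V}=V^{(1)}+P$ and simplifying. I would not reopen any variational-inequality argument: the passage from \eqref{VItilde} to \eqref{VI:P} via $P=\widetilde{V}-V^{(1)}$ is an exact, reversible manipulation (using that $V^{(1)}$ solves \eqref{HJB}), so the fact that the constructed $\widetilde{V}$ is a strong solution of \eqref{VItilde} is inherited automatically from the fact that $P$ solves \eqref{VI:P}.

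The substitution splits along the two regions. On the expansion-free region $t\in[t_2,T]$ one has $P\equiv 0$, hence $\widetilde{V}=V^{(1)}$, and the second branch of the formula is immediate from Lemma \ref{lem:V(1)}. On the waiting region $t\in[t_1,t_2)$ I would collect the common factor $V^{(1)}$: since $P=-(1-e^{\int_t^{t_2}mh(s)\,ds})V^{(1)}$, the sum telescopes to $\widetilde{V}=e^{\int_t^{t_2}mh(s)\,ds}\,V^{(1)}$, after which the scalar prefactor is absorbed into the exponent of $V^{(1)}=-\tfrac{1}{m}\exp\{(-mx-d(t))e^{r(T-t)}+g(t)\}$ to produce the stated first branch. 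This collection of exponentials is the only algebraic step and it is routine.

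The remaining issue is regularity and consistency at the junction $t=t_2$. Continuity of $\widetilde{V}$, $\partial_t\widetilde{V}$, and $\partial_x\widetilde{V}$ on $[t_1,T]\times\mathbb{R}$ transfers directly from the continuity of $P$, $\partial_t P$, $\partial_x P$ asserted in Theorem \ref{thm:tautilde} together with the smoothness of $V^{(1)}$; and at $t=t_2$ the integral $\int_{t_2}^{t_2}mh\,ds$ vanishes, so the two branches coincide and the pasting is automatic. Consequently I expect no genuine obstacle at the level of the corollary itself: all of the real difficulty (identifying the explicit $P$, locating the boundary $t_2$, and establishing the smooth fit there) is already discharged in Theorem \ref{thm:tautilde}, which I am free to invoke, so that the corollary follows by an immediate substitution and merging of exponentials.
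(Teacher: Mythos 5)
Your proposal is correct and follows essentially the same route as the paper, which simply notes that $\widetilde{V}=V^{(1)}+P$ (the paper writes $V+P$, a typo for $V^{(1)}+P$) and reads off the formula by combining Theorem \ref{thm:tautilde} with Lemma \ref{lem:V(1)}; your collection of the prefactor $e^{\int_t^{t_2}mh(s)\,ds}$ into the exponent is exactly the intended algebra. Your additional remarks on regularity and the smooth pasting at $t=t_2$ are sound but not needed beyond what Theorem \ref{thm:tautilde} already provides.
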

   
   \subsubsection{Solution}
   In this part we verify that the solution to the auxiliary problem in \eqref{Vtilde} is indeed the solution to the optimal investment expansion problem in \eqref{def:v} for $t > t_1$. The two problems also share the same optimal expansion region.

   \begin{theorem}\label{thm:tau}
   	Under the exponential utility, the optimal extending time for the original problem \eqref{def:v} is the same as that for the auxiliary problem \eqref{Vtilde}. Thus, 
   	\begin{align*}
   	\textbf{ER} = \widetilde{\textbf{ER}},
   	\end{align*}
   	and $V(t,x) = \widetilde{V}(t, x)$ for $t \in [t_1, T]$. 
   \end{theorem}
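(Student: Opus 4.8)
The plan is to verify that the explicit function $\widetilde{V}$ produced in Corollary~\ref{corVtilde} is in fact a strong solution of the \emph{original} variational inequality \eqref{VI} on $\widetilde{\mathcal{N}}_T=[t_1,T]\times\mathbb{R}$, and then to invoke the verification theorem (Theorem~\ref{thm1}) to conclude $V=\widetilde{V}$ there. The one-sided bound $\widetilde{V}\le V$ is immediate: the auxiliary admissible set $\tilde{\mathcal{A}}_\tau^t$ (which forces $f\equiv\beta$ before $\tau$) is contained in $\mathcal{A}_\tau^t$ (which only requires $f\in[0,\beta]$ before $\tau$), so taking a supremum over the smaller set can only decrease it. The real content is therefore the reverse inequality, which the verification argument supplies together with the identification of the optimal pair.

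The decisive step is to show that, throughout $[t_1,T]$, the maximizer of $f\mapsto\mathcal{L}^f\widetilde{V}=\tfrac12\sigma^2 f^2\partial_{xx}\widetilde{V}+(rx+\mu f-\delta)\partial_x\widetilde{V}$ over $[0,\beta]$ is attained at the endpoint $f=\beta$, so that $\max_{f\in[0,\beta]}\mathcal{L}^f\widetilde{V}=\mathcal{L}^\beta\widetilde{V}=\widetilde{\mathcal{L}}\widetilde{V}$ and the original VI collapses onto the auxiliary VI \eqref{VItilde}. Since the $t$-only adjustment $\int_t^{t_2}mh(s)\,ds$ appearing in Corollary~\ref{corVtilde} is independent of $x$, it cancels in the ratio, and differentiating either branch of $\widetilde{V}$ gives $\partial_x\widetilde{V}>0$, $\partial_{xx}\widetilde{V}<0$, together with the unconstrained maximizer
\[
-\frac{\mu\,\partial_x\widetilde{V}}{\sigma^2\,\partial_{xx}\widetilde{V}}=\frac{\mu}{\sigma^2 m}\exp(-r(T-t)),
\]
which is exactly the quantity $f^1_t$ of Lemma~\ref{lem:V(1)}. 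By the very definition of $t_1$ in \eqref{t1} this is $\ge\beta$ for every $t\ge t_1$, so the concave-in-$f$ objective is still increasing at $f=\beta$ and the constrained maximum sits at $\beta$, validating the conjecture that drove the construction of the auxiliary problem.

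With the maximizer pinned down, the two regimes of \eqref{VI} are checked directly. On $\widetilde{\textbf{CR}}=[t_1,t_2)\times\mathbb{R}$ one has $\widetilde{V}>V^{(1)}$ (here $P>0$ because $h<0$ on $[t_1,t_2)$ forces $\int_t^{t_2}mh<0$), and $-\partial_t\widetilde{V}-\max_{f\in[0,\beta]}\mathcal{L}^f\widetilde{V}=-\partial_t\widetilde{V}-\widetilde{\mathcal{L}}\widetilde{V}=0$ by \eqref{VItilde}. On $\widetilde{\textbf{ER}}=[t_2,T]\times\mathbb{R}$ one has $\widetilde{V}=V^{(1)}$, and the HJB identity for $V^{(1)}$ reduces the residual to $-\partial_tV^{(1)}-\widetilde{\mathcal{L}}V^{(1)}=h(t)\exp\{\cdots\}$, the right-hand side of \eqref{VI:P}; since $h$ is increasing with $h(t)\ge0$ for $t\ge t_2$ by \eqref{t2}, this is nonnegative, giving the required inequality. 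The regularity hypotheses (i)--(iv) of Theorem~\ref{thm1} then follow from the explicit exponential form: continuity of $\widetilde{V},\partial_t\widetilde{V},\partial_x\widetilde{V}$ is part of Theorem~\ref{thm:tautilde}, the sign conditions \ref{condition2} were just computed, the fourth-moment bounds \ref{condition3'} come from the Gaussian marginals of the linear state SDE, and the candidate rule $\tau^*=t_2$ is a deterministic (hence unique) boundary with optimal control $\bar f\equiv\beta$ before $t_2$ and $\bar f=f^1$ afterward. Theorem~\ref{thm1} thus yields $V=\widetilde{V}$ on $[t_1,T]$.

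Finally, $\textbf{ER}=\widetilde{\textbf{ER}}$ is read off from this equality: on $[t_1,T]\times\mathbb{R}$ the events $\{V=V^{(1)}\}$ and $\{\widetilde{V}=V^{(1)}\}$ coincide, Lemma~\ref{lem:CR} ensures $\{t\le t_1\}\subseteq\textbf{CR}$ so $\textbf{ER}$ cannot meet $\{t\le t_1\}$, and $\widetilde{\textbf{ER}}\subseteq[t_1,T]\times\mathbb{R}$ by construction. I expect the endpoint-maximizer step to be the main obstacle: once one knows the constrained optimum is \emph{always} $\beta$ on the whole strip, the rest is bookkeeping, but ruling out an interior maximizer uniformly on $[t_1,T]$ is precisely what rigorously links the auxiliary and original problems. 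A secondary care-point is applying Theorem~\ref{thm1} on the truncated horizon $[t_1,T]$ rather than $[0,T]$, which is legitimate because for any starting time $t\ge t_1$ the expansion problem only involves controls on $[t,T]\subseteq[t_1,T]$.
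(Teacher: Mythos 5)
Your proposal is correct, and it reaches the conclusion by a genuinely different mechanism than the paper, even though the decisive computation is the same. The paper does not show that $\widetilde{V}$ solves the original VI and then verify; instead it writes the VI satisfied by $V$ with its optimal control $f^*\in[0,\beta]$ plugged in, decomposes $\widetilde{\mathcal{L}}\widetilde{V}=\mathcal{L}^{f^*}\widetilde{V}+D(t,x)$ with $D(t,x)=(\beta-f^*)\bigl\{\tfrac12\sigma^2(\beta+f^*)\partial_{xx}\widetilde{V}+\mu\partial_x\widetilde{V}\bigr\}$, determines the sign of $D$ from the ratio $-\mu\partial_x\widetilde{V}/(\sigma^2\partial_{xx}\widetilde{V})=\frac{\mu}{\sigma^2 m}e^{-r(T-t)}\ge\beta$ on $[t_1,T]$, and then invokes an \emph{external} comparison principle for variational inequalities (citing Friedman and Yan--Liang--Yang) to get $\widetilde{V}\ge V$, combining this with the trivial inclusion-of-admissible-sets inequality $\widetilde{V}\le V$. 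Your endpoint-maximizer step is exactly the same inequality as the paper's sign analysis of $D$ (indeed, ``$D$ has the favorable sign for every $f\in[0,\beta]$'' \emph{is} the statement that the constrained maximum sits at $\beta$), so the analytical core coincides; what differs is the closing argument: you show $\widetilde{V}$ is a strong solution of the nonlinear VI \eqref{VI} on $[t_1,T]\times\mathbb{R}$ and conclude $V=\widetilde{V}$ via the paper's own Theorem \ref{thm1}, which makes the proof self-contained (no outside comparison theorem), delivers the optimal pair $(\tau^*,\bar f)=(t_2,\beta\mathbf{1}_{[t_1,t_2)}+f^1\mathbf{1}_{[t_2,T]})$ in the same stroke, and does not even need the one-sided inequality $\widetilde{V}\le V$; the price is checking conditions (i)--(iv) of Theorem \ref{thm1}, which the paper must do anyway (Corollary \ref{cor2}), and the truncated-horizon point you flag, which is handled exactly as you say. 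One further remark in your favor: the paper's printed proof asserts $\partial_{xx}\widetilde{V}>0$ and $D\le 0$, which are sign typos (concavity gives $\partial_{xx}\widetilde{V}<0$, and the comparison argument actually requires $D\ge 0$, i.e., that $\widetilde{V}$ is a supersolution of the VI linearized at $f^*$); your version carries the correct signs. The only caveat, inherited from the paper rather than introduced by you, is the boundary-slice ambiguity at $t=t_1$ (respectively $t=t_2$) in the open/closed description of $\textbf{CR}$ and $\textbf{ER}$, which neither argument resolves carefully.
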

   \begin{proof}
      See Appendix \ref{proof:thm3}.
   \end{proof}
   Combining Theorem \ref{thm:tau} and Proposition \ref{pro:cond2}, the following property is immediate. 
   \begin{pro}
       The business expansion is possible only when the opportunity cost is not too large. Specifically, if $\textbf{ER} \neq \emptyset$, then condition \eqref{parameter2} holds, i.e., $\rho \leq  \frac{\mu^2}{2\sigma^2 m} + \frac{1}{2}\beta^2\sigma^2 m - \beta\mu $. 
   \end{pro}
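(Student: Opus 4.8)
The plan is to obtain this proposition as an immediate corollary, chaining together results already established in the excerpt rather than re-deriving anything about the variational inequality. The target is a purely logical implication: assuming $\textbf{ER}\neq\emptyset$ for the original problem \eqref{def:v}, I want to land on condition \eqref{parameter2}. The two pieces that do all the work are Theorem \ref{thm:tau}, which identifies the expansion region of the original problem with that of the auxiliary problem, i.e. $\textbf{ER}=\widetilde{\textbf{ER}}$, and Proposition \ref{pro:cond2}, which states that nonemptiness of $\widetilde{\textbf{ER}}$ forces \eqref{parameter2}.

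First I would assume $\textbf{ER}\neq\emptyset$. I note that this hypothesis already activates condition \eqref{parameter1}: by Proposition \ref{pro2}, $\textbf{ER}\neq\emptyset$ implies $\mu/\sigma^2>\beta m$, so the auxiliary problem and the threshold time $t_1$ are well-defined and Theorem \ref{thm:tau} is applicable. A small point worth flagging is that Theorem \ref{thm:tau} delivers the equality of the two regions only on the strip $t\in[t_1,T]$, so I must first confirm that the nonemptiness of $\textbf{ER}$ is genuinely a statement about this strip. This is exactly Lemma \ref{lem:CR}: under \eqref{parameter1} no expansion occurs before $t_1$, so $\{(t,x):0\le t\le t_1\}\subseteq\textbf{CR}$, and therefore any point of $\textbf{ER}$ must satisfy $t\ge t_1$. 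Consequently $\textbf{ER}\neq\emptyset$ yields a point of $\textbf{ER}$ in the strip where $\textbf{ER}=\widetilde{\textbf{ER}}$, giving $\widetilde{\textbf{ER}}\neq\emptyset$. Applying Proposition \ref{pro:cond2} then forces \eqref{parameter2}, namely $\rho\le \frac{\mu^2}{2\sigma^2 m}+\frac{1}{2}\beta^2\sigma^2 m-\beta\mu$, which is the desired conclusion.

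I do not expect any real obstacle here, because the substantive analytic content—constructing the auxiliary problem, solving its free-boundary VI, and verifying that the auxiliary solution coincides with the original value function for $t>t_1$—has already been absorbed into Theorem \ref{thm:tau} and Proposition \ref{pro:cond2}. The only thing that requires care is the bookkeeping described above: making sure the region identity is invoked on the correct domain and that Lemma \ref{lem:CR} is used to confine $\textbf{ER}$ to that domain, so that the transfer of nonemptiness from $\textbf{ER}$ to $\widetilde{\textbf{ER}}$ is legitimate. Once that is observed, the proof is a two-line composition of the cited results.
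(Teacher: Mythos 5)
Your proof is correct and follows essentially the same route as the paper, which obtains the proposition by combining Theorem \ref{thm:tau} with Proposition \ref{pro:cond2}. The only difference is that you spell out the bookkeeping the paper leaves implicit (invoking Proposition \ref{pro2} to secure condition \eqref{parameter1} and Lemma \ref{lem:CR} to confine $\textbf{ER}$ to the strip $t\ge t_1$ where the region identity holds), which makes the ``immediate'' claim fully rigorous.
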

   
    This result reveals that the opportunity cost of expanding the business cannot exceed the value $\frac{\mu^2}{2\sigma^2 m} + \frac{1}{2}\beta^2\sigma^2 m - \beta\mu$ or it will never be optimal for the firm to expand the investment and incur the very high cost rate $\rho$.
   
   \begin{remark}
   	If the condition \eqref{parameter2} does not hold,  $t_2$ defined in \eqref{t2} does not exist, and it is never optimal to expand the investment.
   \end{remark}
   
   \begin{pro}\label{pro:control}
   	Under the conditions \eqref{parameter1} and \eqref{parameter2}, the optimal extending time for Problem \eqref{def:v} is 
   	$$\tau^* = t_2.$$
   	The explicit solution to VI \eqref{VI} is given by 
   	\begin{equation}\label{explicitV}
   	V(t, x)=
   	\left\{
   	\begin{array}{lr}
   	-\frac{1}{m}\exp\left\{ (-mx -d_0(t))\exp(r(T-t))  + g_0(t) \right\},~ \text{if } t \in [0, t_1), \\
   	\widetilde{V}(t, x), ~ \text{if } t \in [t_1, T], 
   	\end{array}
   	\right.
   	\end{equation}
   	where $d_0(t) = \left( \frac{(\delta + \rho)m}{r} - \frac{\rho m}{r}\exp(r(T-t_1)) \right)\exp(-r(T-t)) - \frac{\delta m}{r}$, $g_0(t) = -\frac{\mu^2}{2\sigma^2}(T- t) + \int_{t_1}^{t_2}mh(s)ds$,  and $\widetilde{V}$ is given in Corollary  \ref{corVtilde}.  Furthermore, the optimal control for Problem \eqref{def:v} is given by 
   	\begin{align}\label{explicitf}
   	f^*_t = \left\{ 
   	\begin{array}{lr}
   	\frac{\mu}{\sigma^2 m}\exp(-r(T-t)), \text{ if } t\in [0, t_1), \\
   	\beta, \text{ if } t \in [t_1, t_2), \\
   	\frac{\mu}{\sigma^2 m}\exp(-r(T-t)), \text{ if } t\in [t_2, T]. \\
   	\end{array}
   	\right. 
   	\end{align}
   \end{pro}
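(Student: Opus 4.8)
The plan is to split the horizon at $t_1$ and handle the two regimes with different machinery, then glue the pieces and invoke the verification Theorem \ref{thm1}. On $[t_1,T]$ essentially everything is already available from the auxiliary problem, so the work there is citation and bookkeeping; on $[0,t_1)$ I would solve a genuine interior HJB by an exponential-affine ansatz; finally I would assemble the global candidate $(V,f^*,\tau^*)$ and check the verification hypotheses, paying special attention to the junction at $t_1$.

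For the region $t\in[t_1,T]$ I would simply quote the results already proved. Theorem \ref{thm:tau} gives $V=\widetilde V$ on $[t_1,T]$ together with $\textbf{ER}=\widetilde{\textbf{ER}}$, while Theorem \ref{thm:tautilde} and Lemma \ref{lemma:t2} identify the auxiliary exercise boundary as $t_2$, so that $\widetilde{\textbf{ER}}=\{t\ge t_2\}$ and hence $\tau^*=t_2$. The explicit form of $V$ on $[t_1,T]$ is then exactly the formula for $\widetilde V$ in Corollary \ref{corVtilde}, and the optimal control inherits the auxiliary structure: it is frozen at the boundary value $f^*_t\equiv\beta$ on the continuation part $[t_1,t_2)$, and after expansion it equals the interior maximizer $\tfrac{\mu}{\sigma^2 m}e^{-r(T-t)}$ from Lemma \ref{lem:V(1)} on $[t_2,T]$.

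For the region $t\in[0,t_1)$ I would argue directly. By Lemma \ref{lem:CR} this strip lies in $\textbf{CR}$, so $V$ solves $-\partial_t V-\max_{f\in[0,\beta]}\{\mathcal L^f V\}=0$. Inserting the ansatz $V(t,x)=-\tfrac1m\exp\{(-mx-d_0(t))e^{r(T-t)}+g_0(t)\}$, the first-order maximizer is $-\mu\partial_x V/(\sigma^2\partial_{xx}V)=\tfrac{\mu}{\sigma^2 m}e^{-r(T-t)}$, which for $t<t_1$ is strictly below $\beta$ by the definition of $t_1$ and is therefore interior; substituting it linearizes the equation, and matching the $x$-coefficient and the constant term produces two ordinary differential equations for $d_0$ and $g_0$. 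Integrating these backward from $t_1$ under the continuity requirement that this branch agree with $\widetilde V$ at $t_1$ yields precisely the stated $d_0,g_0$; indeed one checks $d_0(t_1)=d(t_1)$ and $g_0(t_1)=g(t_1)+\int_{t_1}^{t_2}mh(s)\,ds$, so the two expressions paste continuously. This also delivers the first branch of the control in \eqref{explicitf}, completing the three-piece formula.

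The final step is verification. The exponential-affine form makes $\partial_x V>0$, $\partial_{xx}V<0$, the local $W^{2,1}_p\cap C$ regularity, and the $L^4$ integrability of $\partial_x V$ along the controlled diffusion all routine, with smooth fit at $t_2$ supplied by Theorem \ref{thm:tautilde}. I expect the main obstacle to be the junction at $t_1$, where the optimal control switches from the interior value to the frozen value $\beta$: one must rule out any spurious boundary contribution there when applying It\^o's formula in the verification argument. This turns out to be harmless precisely because $f^1_{t_1}=\beta$ by the definition of $t_1$, so the drift and diffusion coefficients—and hence the operators $\mathcal L^{f^*}$ and $\widetilde{\mathcal L}$—coincide at $t_1$; continuity of $V$ and $\partial_x V$ then forces continuity of $\partial_t V$, giving the global regularity needed. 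Once this $C^1$ smooth-fit is established, Theorem \ref{thm1} identifies the pieced-together $v$ with $V$ and certifies $(\bar\tau,\bar f)=(t_2,f^*)$ as optimal.
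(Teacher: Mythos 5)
Your proposal is correct and follows essentially the same route as the paper: on $[t_1,T]$ quote Theorem \ref{thm:tau} (so $V=\widetilde V$, $\textbf{ER}=\widetilde{\textbf{ER}}$, $\tau^*=t_2$), and on $[0,t_1)$ solve the HJB with the exponential-affine ansatz, noting the unconstrained maximizer $\tfrac{\mu}{\sigma^2 m}e^{-r(T-t)}$ is interior to $[0,\beta]$ there, which yields the two ODEs for $d_0,g_0$ with terminal data fixed by continuity with $\widetilde V$ at $t_1$. Your additional verification/smooth-fit discussion is sound but is handled by the paper separately in Corollary \ref{cor2} rather than inside this proof.
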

   
   \begin{proof} The proof follows by a simple substitution of the results for the VI. The details are shown in the Appendix \ref{append:pro5}. 
   \end{proof}
   
   \begin{cor}\label{cor2}
   	The value function $V(t, x)$ in \eqref{explicitV} and $V^{(1)}(t, x)$ in \eqref{V(1)exp} satisfy the conditions in Theorem \ref{thm1}, and the optimal control in \eqref{explicitf} is admissible. 
   \end{cor}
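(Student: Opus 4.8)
The plan is to run through the four hypotheses \ref{condition1}--\ref{condition3} of the verification Theorem \ref{thm1} in turn, applied to the explicit pair $(v,v^1)=(V,V^{(1)})$ from \eqref{explicitV} and \eqref{V(1)exp}, and then to check that $f^*$ in \eqref{explicitf} meets \eqref{admissible0}. Since both functions are, on each time sub-interval, of the form $-\tfrac1m\exp\{(\text{affine in }x)+(\text{function of }t)\}$, every \emph{pointwise} property is an elementary computation; the only genuinely analytic inputs are the regularity across the junctions $t_1,t_2$ and a fourth-moment estimate.

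For regularity (condition \ref{condition1}), each piece is $C^\infty$ on the open strips $(0,t_1)\times\mathbb R$, $(t_1,t_2)\times\mathbb R$, $(t_2,T)\times\mathbb R$, so $W^{2,1}_{p,\mathrm{loc}}$ membership is immediate there and it remains to glue. At $t_2$ the continuity of $V,\partial_t V,\partial_x V$ is exactly the content of Theorem \ref{thm:tautilde} (continuity of $P=\widetilde V-V^{(1)}$ and its first derivatives), so $\widetilde V$ fits $C^1$ across the free boundary. At $t_1$ I would verify directly that the coefficients of Proposition \ref{pro:control} satisfy $d_0(t_1)=d(t_1)$ and $g_0(t_1)=g(t_1)+\int_{t_1}^{t_2}mh(s)\,ds$, so the $[0,t_1)$ branch agrees with $\widetilde V$ there; since $f^1_{t_1}=\beta$ by \eqref{t1}, the drift and diffusion coefficients also match, giving continuity of the first derivatives. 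Combined with piecewise smoothness this yields $v,v^1\in W^{2,1}_{p,\mathrm{loc}}(\mathcal N_T)\cap C(\mathcal N_T)$ for every $p>2$. For condition \ref{condition2}, writing $\phi(t,x)$ for the exponent in \eqref{explicitV}, direct differentiation gives $\partial_x V=e^{\phi}e^{r(T-t)}>0$ and $\partial_{xx}V=-m\,e^{\phi}e^{2r(T-t)}<0$ on each branch, and identically for $V^{(1)}$ (consistent with Lemma \ref{lem0}).

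Condition \ref{condition3'} is the step I expect to require the most care. Under the deterministic control $f^*$ the state \eqref{X} is a linear SDE with deterministic coefficients, hence $X^{\tau^*,f^*}_s$ is Gaussian with mean and variance bounded on $[t,T]$. Because $\partial_x V$ and $\partial_x V^{(1)}$ are exponentials of affine functions of $x$ whose $t$-coefficients are bounded over the finite horizon, $(\partial_x V)^4$ and $(\partial_x V^{(1)})^4$ are exponentials of affine functions of $X_s$, and the Gaussian moment generating function bounds $\mathbb E_{t,x}[(\partial_x V(s,X_s))^4]$ uniformly in $s\in[t,T]$; integrating over the finite interval gives \ref{condition3'}. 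This exponential-moment control of the Gaussian state is the main obstacle, but it is manageable precisely because the horizon is finite and the state dynamics are affine.

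For condition \ref{condition3}, formula \eqref{P} shows the coincidence set is $\{P=0\}=\{t\ge t_2\}$, independent of $x$; thus $\textbf{ER}=\{t\ge t_2\}$ and the hitting time $\tau=\inf\{s\ge t:V(s,X_s)=V^{(1)}(s,X_s)\}$ equals the deterministic constant $\bar\tau=t_2$ for $t\le t_2$, so it is well defined and unique, matching Proposition \ref{pro:control}. The candidate $\bar f=f^*$ attains the required maximizer: on $[0,t_1)$ and $[t_2,T]$ it is the unconstrained optimizer $\tfrac{\mu}{\sigma^2 m}e^{-r(T-t)}$ of Lemma \ref{lem:V(1)} (in $[0,\beta]$ before $t_1$ and above $\beta$ after $t_2$ by Proposition \ref{prop:ge1}), while on $[t_1,t_2)$ the constrained optimizer over $[0,\beta]$ is the boundary value $\beta$; substituting these into $\mathcal L^{\bar f}$ and $\mathcal L_1^{\bar f}$ recovers the two equalities in \ref{condition3}, and since $\bar f$ is deterministic the state SDE is affine with a unique strong solution. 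Finally, admissibility of $f^*$ is direct: the control in \eqref{explicitf} is deterministic and bounded on $[0,T]$, so $\int_0^T\|f^*_s\|^4\,ds<\infty$; $\tau^*=t_2\in\mathcal T_{t,T}$; and since $U=-\tfrac1m e^{-mx}<0$ we have $U^-(T,X_T)=\tfrac1m e^{-mX_T}$, whose expectation is finite by the same Gaussian moment generating function bound, so $(\tau^*,f^*)$ satisfies \eqref{admissible0}.
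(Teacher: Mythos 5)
Your proposal is correct and follows essentially the same route as the paper: the paper's proof declares conditions \ref{condition1}, \ref{condition2}, \ref{condition3} and admissibility immediate from the explicit formulas and the preceding results (Theorem \ref{thm:tautilde}, Proposition \ref{pro:control}), and devotes itself entirely to condition \ref{condition3'}, which it verifies exactly as you do --- writing $X_s^{\tau^*,f^*}$ as the solution of a linear SDE with bounded deterministic control $f^*$ and bounding $\mathbb{E}_{t,x}\bigl[(\partial_x V)^4\bigr]$ by the exponential (Gaussian) moment of the stochastic integral. Your additional explicit gluing checks at $t_1$ and $t_2$ and the verification of \ref{condition3} are sound and merely spell out what the paper leaves implicit.
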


	\section{Economic Interpretations}
	\label{sec:parameter}
	In this section, we offer economic interpretations of the optimal investment expansion problem and analyze the effect of the parameters.
	
	\subsection{Analysis on optimal investment expansion}
	Based on the verification theorem established in Section \ref{sec:Problem}, the explicit solution to the VI \eqref{VI} presented in Proposition \ref{pro:control} is indeed the solution to the optimal investment expansion problem \eqref{def:v}. This includes both the optimal extending time and the control. An illustrative graph of the exercise boundary is given in Figure \ref{fig1}. 
	\begin{figure}[H]
		\centering
		\includegraphics[width = 9cm]{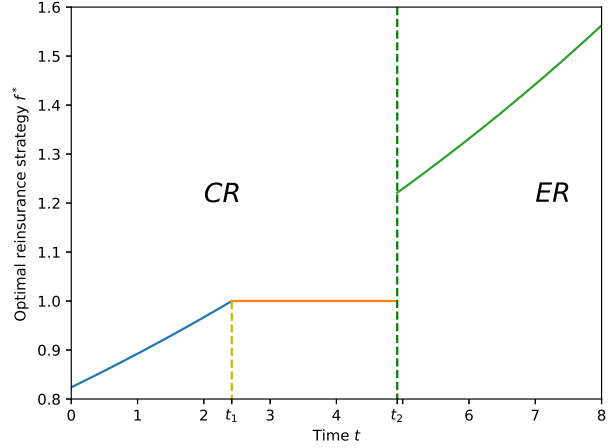}
		\caption{The optimal investment expansion time and investment strategy are computed using $r = 0.08$, $\beta = 1$,  $\mu = 1$, $\rho = 0.04$, $\sigma = 0.8$, $m =1$, and  $T = 8$. }
		\label{fig1}
	\end{figure}

	\begin{pro}
		When the conditions \eqref{parameter1} and \eqref{parameter2} are satisfied,  it is optimal for the firm to expand the admissible control set at a specific time point before the terminal time. 
		In other words, there is always an optimal stopping time $\tau^*\in [0, T]$;  otherwise, it is never optimal for the firm to expand its investment. 
	\end{pro}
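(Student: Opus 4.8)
The plan is to read this proposition as the capstone that merely assembles the structural results already proven for the auxiliary and original problems, since under \eqref{parameter1} and \eqref{parameter2} the optimal extending time has effectively been identified as $t_2$. First I would establish that $t_2$ is a well-defined point of $[0,T]$. Condition \eqref{parameter1} makes $t_1$ in \eqref{t1} well-defined by Lemma \ref{lem:CR}. Evaluating $h$ from \eqref{h(t)} at the terminal time gives $h(T) = \frac{\mu^2}{2\sigma^2 m} + \frac{1}{2}\beta^2\sigma^2 m - (\beta\mu + \rho)$, so condition \eqref{parameter2} is precisely $h(T) \ge 0$. Combining this with the monotonicity of $h$ on $[t_1, T]$ established in the proof of Proposition \ref{pro:cond2}, the set $\{t \in [t_1, T]: h(t) \ge 0\}$ is a nonempty interval containing $T$, so the infimum $t_2$ in \eqref{t2} is attained and satisfies $t_1 \le t_2 \le T$, as recorded in Lemma \ref{lemma:t2}. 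When \eqref{parameter2} holds strictly, $h(T) > 0$ and continuity of $h$ forces $t_2 < T$, which is the ``strictly before the terminal time'' assertion; the boundary case $h(T)=0$ yields the degenerate $t_2 = T$.

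With $t_2 \in [0,T]$ in hand, the forward direction is immediate from the solution already derived. Theorem \ref{thm:tautilde} gives $\tau^* = t_2$ as the exercise boundary of the auxiliary problem \eqref{Vtilde}, and Theorem \ref{thm:tau} transfers this to the original problem so that $\textbf{ER} = \widetilde{\textbf{ER}}$ and $\tau^* = t_2$ is the optimal extending time for \eqref{def:v}, exactly as stated in Proposition \ref{pro:control}. That $(\tau^*, f^*)$ is genuinely admissible and optimal, rather than a mere candidate, is guaranteed by the verification Theorem \ref{thm1} together with Corollary \ref{cor2}. Hence under both conditions an optimal stopping time $\tau^* \in [0,T]$ always exists.

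For the converse, ``otherwise it is never optimal to expand,'' I would argue by contrapositive through the two necessary conditions. If \eqref{parameter1} fails, Proposition \ref{pro2} (equivalently Remark \ref{rem:largem}) gives $\textbf{ER} = \emptyset$. If \eqref{parameter1} holds but \eqref{parameter2} fails, then $h(T) < 0$, so Proposition \ref{pro:cond2} forces $\widetilde{\textbf{ER}} = \emptyset$, and Theorem \ref{thm:tau} gives $\textbf{ER} = \widetilde{\textbf{ER}} = \emptyset$. In either case the expansion region is empty, meaning no expansion time is ever optimal, which is the stated alternative.

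The argument carries essentially no analytic difficulty, since the heavy lifting of characterizing the free boundary via the auxiliary problem and verifying optimality is done in the earlier theorems. The main point requiring care is the well-definedness step: I must ensure the infimum defining $t_2$ is actually attained in $[t_1,T]$ rather than over an empty set, which is exactly where the monotonicity of $h$ on $[t_1,T]$ and the sign of $h(T)$ enter, and I must phrase the two-sided claim so that emptiness of $\textbf{ER}$ is correctly identified with ``never optimal to expand.''
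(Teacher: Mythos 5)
Your proposal is correct and follows essentially the same route as the paper, which states this proposition without a separate proof precisely because it is the assembly of the preceding results: Proposition \ref{pro:control} (via Theorems \ref{thm:tautilde}, \ref{thm:tau}, \ref{thm1} and Corollary \ref{cor2}) gives $\tau^*=t_2\in[0,T]$ under \eqref{parameter1} and \eqref{parameter2}, while the converse follows from Proposition \ref{pro2}, Remark \ref{rem:largem}, and Proposition \ref{pro:cond2} combined with Theorem \ref{thm:tau}. Your added care about the well-definedness of $t_2$ (using $h(T)\ge 0$ and the monotonicity of $h$ on $[t_1,T]$) and the degenerate case $h(T)=0$ matches the paper's Lemma \ref{lemma:t2} and the remark following Theorem \ref{thm:tau}.
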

	
	
	Three sets of values of $t_1$ and $t_2$, and thus three potential expansion strategies for a company if it is intent on expansion, are possible under the conditions \eqref{parameter1} and \eqref{parameter2}.  We discuss these three cases separately. Note that we assume the conditions \eqref{parameter1} and \eqref{parameter2} are satisfied in the following three cases. 
	
	\subsubsection{Case 1}
	If the parameters satisfy 
	\begin{equation}\label{casecond1}
	\frac{\mu}{\sigma^2}\geq \beta m\exp(rT), \text{ and } \rho \leq \frac{\mu^2}{2\sigma^2 m}\exp(-rT) + \frac{1}{2}\beta^2\sigma^2 m\exp(rT) - \beta \mu, 
	\end{equation}
	then by Lemma \ref{lem:CR} and \ref{lemma:t2}, $t_1 = t_2 = 0$. 
	It is optimal for the firm to expand its investment immediately under \eqref{casecond1}.
	
	Compared with the product of risk averseness $m$ and the original control bound $\beta$,  the risk-adjusted return $\frac{\mu}{\sigma^2}$ is sufficiently large in this case.  Thus,  the firm has an incentive to expand its investment from $t_1 = 0$. 
	The opportunity cost $\rho$ is also sufficiently low, and thus it is optimal for the company to begin expanding and invest more immediately at $\tau^* = t_2 = 0$.

	\begin{pro}\label{pro:case1}
	Under \eqref{casecond1}, if the original upper bound of the control $\beta$ increases, the optimal extending time $\tau^* = 0$ requires a higher risk-adjusted return $\frac{\mu}{\sigma^2}$ and a lower opportunity cost $\rho$. 
	\end{pro}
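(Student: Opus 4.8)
The plan is to read the comparative statics straight off the two inequalities that make up condition \eqref{casecond1}, since the discussion preceding the proposition already establishes that \eqref{casecond1} is exactly what forces $t_1 = t_2 = 0$ and hence $\tau^* = 0$. Those two inequalities are the binding constraints for immediate expansion, so the proposition reduces to tracking how each threshold moves as the original bound $\beta$ grows (with $m, r, T, \mu, \sigma$ regarded as the remaining parameters).

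For the first inequality $\frac{\mu}{\sigma^2} \geq \beta m \exp(rT)$, the right-hand side is strictly increasing in $\beta$. Hence as $\beta$ increases the lower threshold on $\frac{\mu}{\sigma^2}$ rises, so a strictly larger risk-adjusted return is needed to preserve this inequality. This half is immediate and needs no further computation.

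The substance lies in the second inequality. I would define its right-hand side as a function of the bound,
\[
g(\beta) := \frac{\mu^2}{2\sigma^2 m}\exp(-rT) + \frac{1}{2}\beta^2\sigma^2 m\exp(rT) - \beta \mu,
\]
and study its monotonicity. Differentiating gives $g'(\beta) = \beta \sigma^2 m \exp(rT) - \mu$, which is nonpositive exactly when $\beta \leq \frac{\mu}{\sigma^2 m \exp(rT)}$, i.e. precisely when the first inequality of \eqref{casecond1} holds. On that branch $g$ is decreasing, so as $\beta$ increases the upper threshold $g(\beta)$ on the opportunity cost $\rho$ falls, and a strictly lower $\rho$ is required to maintain $\rho \leq g(\beta)$. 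Combining the two monotonicity statements yields the claim.

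The main obstacle is conceptual rather than computational: $g$ is a convex quadratic in $\beta$ (note $g''(\beta) = \sigma^2 m \exp(rT) > 0$) and so is not globally decreasing, so one cannot simply assert that its maximal admissible value for $\rho$ shrinks. The key observation is that the first inequality of \eqref{casecond1} confines $\beta$ to exactly the left branch on which $g' \leq 0$; once this linkage between the two conditions is recognised, the remainder is a one-line derivative sign check.
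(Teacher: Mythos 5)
Your proof is correct and follows essentially the same route as the paper, which simply differentiates the two threshold quantities $\beta m\exp(rT)$ and $\frac{\mu^2}{2\sigma^2 m}\exp(-rT) + \frac{1}{2}\beta^2\sigma^2 m\exp(rT) - \beta\mu$ with respect to $\beta$ and reads off the signs. In fact you supply the one detail the paper's one-line argument leaves implicit---that the quadratic threshold for $\rho$ is convex in $\beta$ and is decreasing precisely because the first inequality of \eqref{casecond1} confines $\beta$ to the left branch where $g'(\beta) = \beta\sigma^2 m\exp(rT) - \mu \leq 0$---so your write-up is, if anything, the more complete version of the same argument.
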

	By taking the derivative on the values $\beta m\exp(rT)$ and $\frac{\mu^2}{2\sigma^2 m}\exp(-rT) + \frac{1}{2}\beta^2\sigma^2 m\exp(rT) - \beta \mu$, the above proposition immediately follows the equation \eqref{casecond1}. Proposition \ref{pro:case1} reveals that a higher risk-adjusted return and a lower cost are required to initially expand the investment, when the original admissible control set is enlarged. The explanation is that the firm  will not be particularly motivated to expand the admissible control set when the original control set is already large. 
	
	\subsubsection{Case 2}
	If the parameters satisfy 
    \begin{equation}\label{casecond2}
    \frac{\mu}{\sigma^2 }\geq \beta m \exp(rT), \text{ and } \rho >  \frac{\mu^2}{2\sigma^2 m}\exp(-rT) + \frac{1}{2}\beta^2\sigma^2 m\exp(rT) - \beta \mu,
    \end{equation}
	then by Lemma \ref{lem:CR} and \ref{lemma:t2}, $t_1 = 0$, and 
	\begin{equation*}
	 t_2  = T -\frac{1}{r}\ln\left(\frac{(\beta\mu + \rho) - \sqrt{(\beta \mu + \rho)^2 - \beta^2\mu^2}}{\beta^2\sigma^2 m} \right) > 0. 
	\end{equation*}
	Here,  the company never expands the investment until $t_2$ to begin. Meanwhile, the firm uses up the trading limit between $t_1$ and $t_2$, i.e., $f^* \equiv \beta$.
	
	As in Case 1,    $\frac{\mu}{\sigma^2}$ is sufficiently large compared with the product of the risk averseness $m$ and the bound $\beta$.  The company has the incentive to expand the investment from $t_1 = 0$.  
	Although $\rho \leq  \frac{\mu^2}{2\sigma^2 m} + \frac{1}{2}\beta^2\sigma^2 m -\beta\mu$, which is ensured by the condition \eqref{parameter2}, the opportunity cost is not as low as in Case 1.  Thus, it is not optimal for the company to expand its investment at time 0, and the optimal extending time is $\tau^* = t_2 > t_1$. 
	
	If there is no opportunity cost ($\rho = 0$),  the problem is reduced to a standard optimal control problem.  The firm would then expand its investment at $t_1$, i.e., $t_2 = t_1 =0$.  The period in which $f^* \equiv \beta$ in Figure \ref{fig1} would not exist.  However, when $\rho > 0$, the optimal extending time $\tau^* = t_2$ is not necessarily equal to $t_1$, as we have demonstrated.  We therefore refer to $t_2 - t_1$ as the waiting time.   Only when the opportunity cost is sufficiently low, as in Case 1, is the waiting time zero, and the firm expands its investment after receiving the incentive.  
	
	As the opportunity cost is not sufficiently low in this case, the waiting time is
	\begin{equation}\label{wait1}
	t_2 - t_1 =  T -\frac{1}{r}\ln\left(\frac{(\beta \mu + \rho) - \sqrt{(\beta \mu + \rho)^2 - \beta^2\mu^2}}{\beta^2\sigma^2 m} \right).
	\end{equation}
	With the representation of waiting time in \eqref{wait1}, we have 
	\begin{equation*}
	\frac{\partial(t_2 - t_1)}{\partial \rho} = \frac{1}{r \sqrt{(\beta \mu + \rho)^2 - \beta^2 \mu^2}} > 0.
	\end{equation*}
	Thus, the waiting time increases with the opportunity cost. This supports the intuition that a firm becomes cautious when the cost of expanding the business is high but affordable.
	
	\subsubsection{Case 3}
	If  
	\begin{equation}\label{casecond3}
	\frac{\mu}{\sigma^2} < \beta m\exp(rT), 
	\end{equation}
	then by Lemma \ref{lem:CR} and \ref{lemma:t2}, 
	\begin{equation*}
	t_1 = T - \frac{1}{r}\ln(\frac{\mu}{\sigma^2 \beta m}) \geq 0, \text{ and }  t_2  = T -\frac{1}{r}\ln\left(\frac{(\beta\mu + \rho) - \sqrt{(\beta \mu + \rho)^2 - \beta^2\mu^2}}{\beta^2\sigma^2 m} \right) \geq t_1. 
	\end{equation*}
	In this final situation, the firm selects the optimal control in $[0, \beta]$ before $t_1$, uses up the trading limit between $t_1$ and $t_2$ ($f^* \equiv \beta$), and expands at $\tau^* = t_2$.

	Here,  $\frac{\mu}{\sigma^2} < \beta m\exp(rT)$.   Compared with the product of  $m$ and $\beta$, the risk-adjusted return $\frac{\mu}{\sigma^2}$ is not as large as in Cases 1 and 2.  Expanding its investment is not particularly attractive for the firm.  Thus, the company will not have the incentive to expand its investment until $t_1 > 0$.  Even if there were no opportunity cost, the company would wait until time $t_1 = T - \frac{1}{r}\ln(\frac{\mu}{\sigma^2 \beta m}) > 0$ to begin the expansion. When $\rho > 0$, the waiting time is given by 
	\begin{equation}\label{wait2}
	t_2 - t_1 = \frac{1}{r}\ln\left(\frac{(\beta\mu + \rho) + \sqrt{(\beta \mu + \rho)^2 - \beta^2\mu^2}}{\beta \mu} \right).
	\end{equation}
	As in Case 2, we have 
	\begin{equation*}
	\frac{\partial(t_2 - t_1)}{\partial \rho} = \frac{1}{r \sqrt{(\beta \mu + \rho)^2 - \beta^2\mu^2}} > 0.
	\end{equation*}
	We derive the following proposition after these results are combined with those from Case 2.
	\begin{pro}
     Under \eqref{casecond2} or \eqref{casecond3}, the waiting time is an increasing function with respect to the opportunity cost $\rho$. 
	\end{pro}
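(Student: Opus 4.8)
The plan is to establish the monotonicity by a direct differentiation, leaning on the closed forms for $t_1$ and $t_2$ already recorded in Cases 2 and 3. The key preliminary observation is that the lower endpoint $t_1$ carries no dependence on the opportunity cost $\rho$ in either regime: under \eqref{casecond2} one has $t_1=0$, while under \eqref{casecond3} one has $t_1=T-\frac{1}{r}\ln\!\big(\frac{\mu}{\sigma^2\beta m}\big)$, and neither expression contains $\rho$. Hence $\partial_\rho(t_2-t_1)=\partial_\rho t_2$ in both cases, and it suffices to differentiate the single closed form $t_2=T-\frac{1}{r}\ln\!\big(\frac{(\beta\mu+\rho)-\sqrt{(\beta\mu+\rho)^2-\beta^2\mu^2}}{\beta^2\sigma^2 m}\big)$, which is the formula valid throughout the range $\rho>0$ spanned by the two cases.

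Next I would execute the chain rule. Setting $u=\beta\mu+\rho$ (so $du/d\rho=1$) and $g(u)=u-\sqrt{u^2-\beta^2\mu^2}$, a one-line computation gives $g'(u)=1-\frac{u}{\sqrt{u^2-\beta^2\mu^2}}=-\frac{g(u)}{\sqrt{u^2-\beta^2\mu^2}}$, whence $\frac{d}{d\rho}\ln g(u)=g'(u)/g(u)=-\frac{1}{\sqrt{u^2-\beta^2\mu^2}}$. Since the constant $\beta^2\sigma^2 m$ inside the logarithm drops out under differentiation, this yields $\partial_\rho t_2=\frac{1}{r\sqrt{(\beta\mu+\rho)^2-\beta^2\mu^2}}$. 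Finally, noting $(\beta\mu+\rho)^2-\beta^2\mu^2=\rho(\rho+2\beta\mu)>0$ for every $\rho>0$ under the standing positivity of the parameters, the radical is real and strictly positive, so $\partial_\rho(t_2-t_1)>0$ in both cases, which is precisely the common expression already displayed after \eqref{wait1} and \eqref{wait2}.

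There is no genuine analytic obstacle here — the result is a corollary of the explicit exercise-boundary formulas — so the only care required is bookkeeping. The two points to watch are (i) confirming that $t_1$ is honestly $\rho$-free in each regime, which is what collapses $\partial_\rho(t_2-t_1)$ onto $\partial_\rho t_2$, and (ii) the cancellation of $g(u)$ in the ratio $g'(u)/g(u)$, the one spot where a sign slip could occur. As a cross-check on the Case~3 representation I would also differentiate the rationalized form \eqref{wait2} directly: with $k(u)=u+\sqrt{u^2-\beta^2\mu^2}$ one gets $k'(u)/k(u)=\frac{1}{\sqrt{u^2-\beta^2\mu^2}}$, reproducing the same strictly positive derivative and confirming that \eqref{wait1} and \eqref{wait2} differ only by a $\rho$-independent constant.
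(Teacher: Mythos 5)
Your proposal is correct and follows essentially the same route as the paper: both differentiate the closed-form expressions from Cases 2 and 3 to obtain $\partial_\rho(t_2-t_1)=\frac{1}{r\sqrt{(\beta\mu+\rho)^2-\beta^2\mu^2}}>0$. Your explicit remark that $t_1$ is $\rho$-free and your cross-check reconciling \eqref{wait1} with \eqref{wait2} are just tidier bookkeeping of the same computation.
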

	An example of how the waiting time varies with the opportunity cost is illustrated in Figure \ref{fig:wt}. 
	\begin{figure}[H]
		\centering
		\includegraphics[width = 9cm]{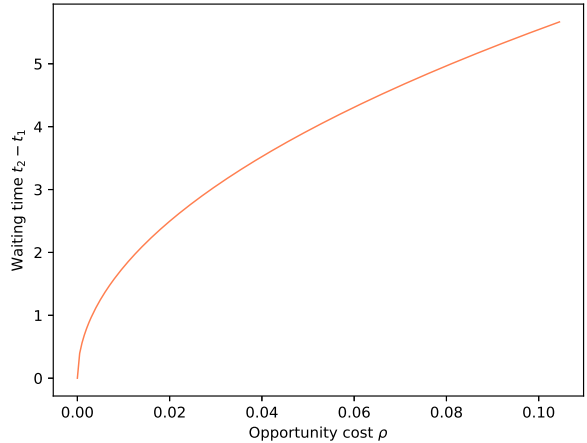}
		\caption{Waiting time with $\rho$ varying in $[0, 0.10]$, $\beta = 1$, $r = 0.08$, $\mu = 0.9$, $\sigma = 0.8$, $m =1$.}
		\label{fig:wt}
	\end{figure}
	From the derivative on $\tau^* = t_2$ with respect to $\beta$, we obtain the following property. 
	\begin{pro}
	Under \eqref{casecond2} or \eqref{casecond3},  the optimal extending time appears later when the original bound $\beta$ grows.
	\end{pro}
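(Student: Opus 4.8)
The plan is to treat conditions \eqref{casecond2} and \eqref{casecond3} simultaneously, since under either one the optimal extending time is given by the same closed form
$$\tau^* = t_2 = T - \frac{1}{r}\ln\left(\frac{(\beta\mu+\rho) - \sqrt{(\beta\mu+\rho)^2 - \beta^2\mu^2}}{\beta^2\sigma^2 m}\right).$$
Writing $g(\beta)$ for the positive argument of the logarithm, I would reduce the whole statement to showing that $g$ is strictly decreasing in $\beta$: because $t_2 = T - \frac{1}{r}\ln g(\beta)$ and $r>0$, this at once gives $\partial t_2/\partial\beta > 0$, i.e., the extending time appears later as $\beta$ grows. I deliberately avoid differentiating $t_2$ directly, since the nested square root makes that computation algebraically unpleasant.

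The key step is to remove the square root by rationalizing. Using the identity $(\beta\mu+\rho)^2 - \beta^2\mu^2 = \rho(2\beta\mu+\rho)$ and multiplying numerator and denominator of $g(\beta)$ by the conjugate $(\beta\mu+\rho) + \sqrt{\rho(2\beta\mu+\rho)}$, the numerator collapses to $\beta^2\mu^2$, and the factor $\beta^2$ cancels against the $\beta^2$ already present in the denominator. This yields the much cleaner expression
$$g(\beta) = \frac{\mu^2}{\sigma^2 m\left[(\beta\mu+\rho) + \sqrt{\rho(2\beta\mu+\rho)}\right]}.$$

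From this form the monotonicity is transparent. The bracketed denominator $D(\beta) := (\beta\mu+\rho) + \sqrt{\rho(2\beta\mu+\rho)}$ is a sum of strictly increasing functions of $\beta$, with $D'(\beta) = \mu + \mu\sqrt{\rho}/\sqrt{2\beta\mu+\rho} > 0$ when $\rho>0$ and $D'(\beta)=\mu>0$ when $\rho=0$. Hence $D$ is strictly increasing, so $g = \mu^2/(\sigma^2 m D)$ is strictly decreasing, and therefore $t_2$ is strictly increasing in $\beta$, which is the claim. I expect the only genuine obstacle to be the square root, which the rationalization removes entirely, leaving only the elementary observation that $D(\beta)$ increases with $\beta$. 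As a sanity check, the specialization $\rho=0$ gives $g(\beta)=\mu/(\sigma^2 m\beta)$ and $t_2=t_1$, consistent with the earlier finding that the waiting time vanishes when there is no opportunity cost.
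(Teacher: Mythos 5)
Your proof is correct and takes essentially the same route as the paper, which simply asserts the result ``from the derivative on $\tau^* = t_2$ with respect to $\beta$'': you establish exactly that monotonicity, and your rationalization to $g(\beta) = \mu^2/\bigl(\sigma^2 m\bigl[(\beta\mu+\rho)+\sqrt{\rho(2\beta\mu+\rho)}\bigr]\bigr)$ is a clean way to carry out the computation the paper leaves implicit, covering both $\rho>0$ and $\rho=0$.
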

	From the above analysis, we find that the risk-adjusted return, the company's risk averseness, the original control bound, and the opportunity cost are all important for the optimal expansion decision.

	\section{Optimal reinsurance business expansion}
	\label{sec:RePro}
	In this section, we apply our problem formulation and results to examine another concrete problem related to reinsurance business expansion. We first review the classic proportional reinsurance model and then link the optimal reinsurance business expansion problem to the investment expansion problem.  
	Although dynamic reinsurance problems have been extensively investigated, it may be criticized that the dynamic reinsurance contracts are unrealistic. However, our result will show that the optimal reinsurance strategy is a deterministic function under the exponential utility function, and thus the insurer can negotiate with the reinsurer and the result is still practically applicable. 
	\subsection{Proportional reinsurance model}
	For a probability space  $(\Omega, \mathcal{F}, \mathbb{P})$ with a filtration $\mathcal{F}_{t}$,  suppose that the insurer's surplus $R_t$ follows the classical Cram\'er-Lundberg model:
	$$
	R_{t} = \hat{x}_0 + c t-\sum_{i=1}^{N_{t}} Z_{i}
	$$
	with the initial surplus  $\hat{x}_0$, in which the arrival process $N_{t}$ is a Poisson process with a constant intensity $\lambda>0$ and the random variables $Z_{i}, i=1,2, \cdots,$ are independent identically distributed claim sizes, independent of $N_{t}$. Let  $G(z)$ denote the claim size distribution with finite first and second moments of $z_1$ and $z_2$, respectively.  The premium rate $c$ is assumed to be calculated via the expected value principle,
	$$
	c=(1+\eta) \lambda z_1, 
	$$
	where $\eta>0$ is the safety loading factor for the insurer.  Assume that  the insurer can manage the insurance risk by purchasing reinsurance. The insurer invests in a risk-free bond with the constant risk-free interest rate $r>0$. We denote the proportional reinsurance strategy by $f_{t}$, $t \in [0, T]$.  The insurer's surplus process then reads
	\begin{align}\label{Xjump}
	d \hat{X}_{t}^{f}=\left\{r\hat{X}_t^f+[ f_t(1+\theta)-(\theta-\eta)] \lambda z_1 \right\}d t-d \sum_{i=1}^{N_{t}} f_{T_{i}} Z_{i}, \quad \hat{X}_0 = \hat{x}_0,
	\end{align}
	where $\theta(\theta \geq \eta)$ represents the safety loading factor for the reinsurer, and  $(1+\theta)(1-f_t) \lambda z_1$ is the premium rate payable to the reinsurance company. 
	
	The conventional positive constraint $f \in [0, \infty)$ is typically adopted in the literature, as in \cite{Zeng2016} and \cite{YW2020}.  $f_t\in[0,1]$ corresponds to a proportional cover. The reinsurance company covers a fraction of the claim, $1-f_{t}$, while the insurer covers the remainder of $f_{t}$.  $f_{t}>1,$ is interpreted as a situation in which the insurer acquires a new business, and thus the insurer herself becomes a reinsurer at some time point without incurring a cost.  The control $f$ measures the risk that the insurer takes on. 
	
	We aim to refine this interpretation. After the reinsurance strategy $f_t$ is allowed to exceed 1 at $t$, the insurer has already expanded her pure insurance business to offer reinsurance protection since $\tau<t$.  Thus, the admissible set is enlarged at $\tau$ and afterward.  In reality, the insurer would bear an opportunity cost for this business expansion.  
	
	\subsection{Reinsurance business expansion}
	\label{sec:rein Problem}

	Recall that $\mathcal{T}_{t, T}$ is the set of all stopping times. These refer to the extending times in our study, and the agent switches from the pure insurance business to reinsurance at $\tau$. For $t<\tau$,  the reinsurance strategy $f_t$ is restricted to the control set $[0, 1]$. For $t\ge\tau$, the insurer expands her pure insurance business to becomes a reinsurance provider so that $f_t\in[0, \infty)$. The insurer then pays the opportunity cost continuously at the rate $\rho$ based on the current surplus level.  Thus, $\mathcal{D}_1 = [0, 1]$ and $\mathcal{D}_2 = [0, \infty)$ in Section \ref{sec:Problem}. 
	When the jump term $d \sum_{i=1}^{N_{t}} f_{T_{i}} Z_{i}$ in the surplus process  \eqref{Xjump} is approximated by a diffusion, as $f_t\left[\lambda z_1 d t - f_t\sqrt{\lambda z_2} d W_{t}\right]$, the approximated surplus process becomes 
     \begin{equation}\label{Xhat}
     d\hat{X}_{s}^{\tau, f}=\left(r \hat{X}_{s}^{\tau, f}+ \mu f_s  -\delta- \rho \boldsymbol{1}_{\{s\ge\tau\}} \right) d s +\sigma f_s d W_s,~ \hat{X}_t = x,
     \end{equation}
	where $\mu = \theta\lambda z_1$, $\delta = (\theta - \eta)\lambda z_1$, and $\sigma = \sqrt{\lambda z_2}$.  We continue to use the exponential utility function in \eqref{Uexponential} for this problem and take a similar approach to defining the insurer's optimization problem as for the optimal investment expansion problem.
	
	Although in reality the problems are different in nature, the optimal reinsurance business expansion problem is mathematically a special case of the problem \eqref{def:v} when $\beta = 1$.  To avoid redundancy, we present the results and economic explanation in the following, rather than discussing the details of the proofs.
     
    \begin{pro}
        If 
        \begin{equation}\label{parameterhat}
          \frac{\mu}{\sigma^2} > m \text{ and }  \rho \leq  \frac{\mu^2}{2\sigma^2 m} + \frac{1}{2}\sigma^2 m - \mu,
        \end{equation}
    then it is  always optimal for the insurer to expand her reinsurance business before the terminal time $T$. Otherwise, it is never optimal for the insurer to expand her business in $[0, T]$. 
    \end{pro}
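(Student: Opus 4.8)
The plan is to exploit the fact that, after the diffusion approximation, the reinsurance expansion problem is \emph{verbatim} the optimal investment expansion problem \eqref{def:v} with $\beta = 1$: the controlled surplus \eqref{Xhat} coincides in form with \eqref{X}, the admissible sets are $\mathcal{D}_1 = [0,1]$ and $\mathcal{D}_2 = [0,\infty)$, and the utility is the same exponential \eqref{Uexponential}. Consequently every result of Section \ref{sec:sol} applies after setting $\beta = 1$, and the two inequalities in \eqref{parameterhat} are precisely conditions \eqref{parameter1} and \eqref{parameter2} specialized to $\beta = 1$. The proof then reduces to assembling the sufficiency and necessity of these two conditions for $\textbf{ER} \neq \emptyset$.

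For sufficiency, I would assume both inequalities in \eqref{parameterhat}. The first, $\mu/\sigma^2 > m$, is \eqref{parameter1} at $\beta = 1$, so $f^1_T = \mu/(\sigma^2 m) > 1$ and Lemma \ref{lem:CR} guarantees that $t_1$ in \eqref{t1} is well defined with $t_1 \in [0,T]$. The second gives $h(T) = \frac{\mu^2}{2\sigma^2 m} + \frac{1}{2}\sigma^2 m - (\mu + \rho) \geq 0$; since (as in the proof of Proposition \ref{pro:cond2}) $h$ is increasing on $[t_1, T]$, Lemma \ref{lemma:t2} shows $t_2$ in \eqref{t2} is well defined with $t_1 \leq t_2 \leq T$. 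Proposition \ref{pro:control} at $\beta = 1$ then delivers the optimal extending time $\tau^* = t_2 \in [0,T]$, so $\textbf{ER} \neq \emptyset$ and expansion by time $T$ is optimal.

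For the ``otherwise'' clause, I would argue by contraposition and split on which inequality fails. If $\mu/\sigma^2 \leq m$, then $f^1_T = \mu/(\sigma^2 m) \leq 1 = \beta$, and Proposition \ref{pro2} (equivalently Remark \ref{rem:largem}) forces $\textbf{ER} = \emptyset$. If instead $\mu/\sigma^2 > m$ but $\rho > \frac{\mu^2}{2\sigma^2 m} + \frac{1}{2}\sigma^2 m - \mu$, then \eqref{parameter1} holds, so the auxiliary problem \eqref{Vtilde} is well posed; Theorem \ref{thm:tau} yields $\textbf{ER} = \widetilde{\textbf{ER}}$, while Proposition \ref{pro:cond2} asserts $\widetilde{\textbf{ER}} \neq \emptyset \Rightarrow \eqref{parameter2}$. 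Since \eqref{parameter2} at $\beta = 1$ is violated, its contrapositive gives $\widetilde{\textbf{ER}} = \emptyset$, hence $\textbf{ER} = \emptyset$. In both cases expansion is never optimal on $[0,T]$.

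The main obstacle is not any single computation but the logical bookkeeping of the reduction. One must verify that the reinsurance setup is an exact instance of \eqref{def:v} at $\beta = 1$ (so that the explicit solution of Proposition \ref{pro:control} and the equivalence $\textbf{ER} = \widetilde{\textbf{ER}}$ of Theorem \ref{thm:tau} transfer), and one must respect the order in which the two conditions enter: \eqref{parameter1} is needed merely to define $t_1$ and the auxiliary problem, and only once it holds can the auxiliary-problem equivalence and Proposition \ref{pro:cond2} be invoked to extract \eqref{parameter2}. Some care is also needed at the boundary case of equality in \eqref{parameterhat}, where $h(T) = 0$ forces $t_2 = T$, so that ``expansion before $T$'' should be read as expansion at some $\tau^* \in [0,T]$.
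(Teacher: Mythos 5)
Your proposal is correct and follows exactly the paper's intended route: the paper explicitly treats the reinsurance problem as the investment expansion problem \eqref{def:v} with $\beta = 1$ and omits the proof ``to avoid redundancy,'' so the argument is precisely your assembly of Lemmas \ref{lem:CR} and \ref{lemma:t2} with Proposition \ref{pro:control} for sufficiency, and of Proposition \ref{pro2} (Remark \ref{rem:largem}) together with Theorem \ref{thm:tau} and Proposition \ref{pro:cond2} for the ``otherwise'' clause. Your closing caveats---the order in which \eqref{parameter1} and \eqref{parameter2} must be invoked, and the boundary reading $\tau^* = t_2 = T$ when $h(T)=0$---match the paper's own treatment.
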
 
    As in the optimal investment expansion problem, condition \eqref{parameterhat} requires the risk-adjusted return to be greater than the risk averseness, and the opportunity cost for running the reinsurance business cannot be too high. Otherwise, it will never be optimal for the agent to expand into the reinsurance business.

    Under the condition \eqref{parameterhat}, define
    	\begin{equation}\label{t1hat}
   	\hat{t}_1 := \inf\{t\in [0, T]: \frac{\mu}{\sigma^2 m} \exp(-r(T-t)) \geq 1 \} \geq 0, 
   	\end{equation}
    and 
    	\begin{equation}\label{t2hat}
    	\hat{t}_2 = \inf\{t\in [\hat{t}_1, T]: \hat{h}(t) \geq 0\} \geq \hat{t}_1,
    	\end{equation}
    	where 
    	\begin{equation}\label{h(t)hat}
    	\hat{h}(t) \triangleq \frac{\mu^2}{2\sigma^2 m} + \frac{1}{2}\sigma^2 m\exp(2r(T-t)) -(\mu + \rho)\exp(r(T-t)). 
    	\end{equation}
    
    Similar to the result in Proposition \ref{pro:control}, the following result is immediate.
    \begin{cor}
    Under the conditions \eqref{parameterhat}, the optimal extending time for the reinsurance expansion problem is $\tau^* = \hat{t}_2$.  The optimal reinsurance strategy is given by
		\begin{align}\label{explicitfhat}
		f^*_t = \left\{ 
		\begin{array}{lr}
		\frac{\mu}{\sigma^2 m}\exp(-r(T-t)), \text{ if } t\in [0, \hat{t}_1), \\
		1, \text{ if } t \in [\hat{t}_1, \hat{t}_2), \\
		\frac{\mu}{\sigma^2 m}\exp(-r(T-t)), \text{ if } t\in [\hat{t}_2, T]. \\
		\end{array}
		\right. 
		\end{align}
    \end{cor}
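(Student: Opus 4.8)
The plan is to exploit the observation already flagged in the text: once the jump term in \eqref{Xjump} is replaced by its diffusion approximation, the reinsurance expansion problem becomes \emph{literally} the optimal investment expansion problem \eqref{def:v} with the pre-expansion bound $\beta$ set equal to $1$. Accordingly, the first step is to confirm that the approximated surplus dynamics \eqref{Xhat} coincide termwise with \eqref{X}. Under the identifications $\mu = \theta\lambda z_1$, $\delta = (\theta-\eta)\lambda z_1$, and $\sigma = \sqrt{\lambda z_2}$, expanding the drift $[f_t(1+\theta)-(\theta-\eta)]\lambda z_1$ of \eqref{Xjump} and subtracting the approximated mean claim rate $f_t\lambda z_1$ yields exactly $\mu f_t - \delta$, while the approximated claim fluctuation supplies the diffusion coefficient $\sigma f_t$. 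Hence \eqref{Xhat} is \eqref{X} verbatim, the admissible set is \eqref{admissible} with $\beta = 1$, and both problems are maximized under the same exponential utility \eqref{Uexponential}.

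The second step is to check that every object named in the corollary is merely the $\beta = 1$ specialization of its investment-problem analogue. Substituting $\beta = 1$ into the admissibility conditions \eqref{parameter1} and \eqref{parameter2} produces precisely the two inequalities in \eqref{parameterhat}. The same substitution turns the threshold $t_1$ of \eqref{t1} into $\hat{t}_1$ of \eqref{t1hat}, the function $h(t)$ of \eqref{h(t)} into $\hat{h}(t)$ of \eqref{h(t)hat}, and consequently the stopping threshold $t_2$ of \eqref{t2} into $\hat{t}_2$ of \eqref{t2hat}. Because Lemma \ref{lem:CR} (no expansion before $t_1$) and Lemma \ref{lemma:t2} (continuation on $[t_1,t_2]$) hold for arbitrary $\beta>0$, they carry over unchanged at $\beta = 1$.

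With these identifications in hand, the conclusion is immediate from Proposition \ref{pro:control}: that proposition delivers $\tau^* = t_2$ together with the three-regime control \eqref{explicitf}, and specializing to $\beta = 1$ gives $\tau^* = \hat{t}_2$ and replaces the constant middle branch $f^*\equiv\beta$ by $f^*\equiv 1$, which is exactly \eqref{explicitfhat}. Optimality and admissibility are inherited from Corollary \ref{cor2} and the verification Theorem \ref{thm1}, since those results are valid for every admissible $\beta>0$ and in particular for $\beta = 1$; no part of the verification argument uses $\beta\neq 1$.

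The one point demanding genuine care --- and the step I expect to be the crux --- is the legitimacy of passing from the jump-diffusion surplus \eqref{Xjump} to the diffusion \eqref{Xhat}. The entire transfer rests on the claim that, after the diffusion approximation of the aggregate-claims process, the controlled dynamics are \emph{identical in form} to \eqref{X}; everything downstream is then pure specialization rather than new analysis. I would therefore make the approximation explicit and verify the parameter matching at the outset, after which every ingredient of Section \ref{sec:sol} --- the auxiliary problem \eqref{Vtilde}, the explicit solution $P$ of Theorem \ref{thm:tautilde}, and the equivalence of exercise regions in Theorem \ref{thm:tau} --- transfers to the reinsurance setting without modification.
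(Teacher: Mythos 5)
Your proposal is correct and takes essentially the same route as the paper: the paper also treats the reinsurance expansion problem as the $\beta=1$ specialization of the investment expansion problem \eqref{def:v}, matching $\mu=\theta\lambda z_1$, $\delta=(\theta-\eta)\lambda z_1$, $\sigma=\sqrt{\lambda z_2}$, and declares the corollary ``immediate'' from Proposition \ref{pro:control}, exactly as you argue. Your one flagged concern---justifying the jump-to-diffusion approximation---is not actually part of the proof burden, since the paper defines the reinsurance expansion problem directly on the approximated dynamics \eqref{Xhat}, so no new analysis is required there.
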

    
	Under \eqref{parameterhat}, the three results for $\hat{t}_1$ and $\hat{t}_2$ are as follows. 
	\begin{enumerate}[label = (\Roman*)]
		\item\label{case1} 	If 
		$$\frac{\mu}{\sigma^2}\geq m\exp(rT) \text{ and } \rho \leq \frac{\mu^2}{2\sigma^2 m}\exp(-rT) + \frac{1}{2}\sigma^2 m\exp(rT) - \mu,$$
		then 
		$\hat{t}_1 = \hat{t}_2 = 0.$
		
		The insurer should start the reinsurance business immediately.  
		\item\label{case2} If 
		$$\frac{\mu}{\sigma^2 }\geq m \exp(rT) \text{ and } \rho >  \frac{\mu^2}{2\sigma^2 m}\exp(-rT) + \frac{1}{2}\sigma^2 m\exp(rT) - \mu, $$
		then 
		$$\hat{t}_1 = 0 \text{ and } \hat{t}_2  = T -\frac{1}{r}\ln\left(\frac{(\mu + \rho) - \sqrt{(\mu + \rho)^2 - \mu^2}}{\sigma^2 m} \right).$$
		
		Here,  the insurer never purchases reinsurance protection  between $\hat{t}_1$ and $\hat{t}_2$ ($f^* \equiv 1$) and waits until $\hat{t}_2$ to start the reinsurance business. 
		\item\label{case3} If  $\frac{\mu}{\sigma^2}< m\exp(rT),$
		then 
		\begin{equation*}
		\hat{t}_1 = T - \frac{1}{r}\ln(\frac{\mu}{\sigma^2 m}) \text{ and }  \hat{t}_2  = T -\frac{1}{r}\ln\left(\frac{(\mu + \rho) - \sqrt{(\mu + \rho)^2 - \mu^2}}{\sigma^2 m} \right). 
		\end{equation*}
		In this final situation, the insurer purchases reinsurance to manage her insurable risk exposure before $\hat{t}_1$, halts the purchasing between $\hat{t}_1$ and $\hat{t}_2$, and expands at $\tau^* = \hat{t}_2$.		
	\end{enumerate}

	\section{Conclusion}
	\label{sec:conclude}
	Financial firms typically face regulations or constraints on their business activities  and attempt to reduce the power of such restrictions to obtain higher profits.  We take the firms' perspective in this paper and propose a novel optimal business expansion problem to assess the optimal expansion time. We formulate this as an optimal mixed control and stopping problem.  The framework is then applied to both optimal investment and reinsurance business expansion problems. 
	Although many studies focus on optimal portfolio problems with particular limitations,  we contribute to the literature by providing an assessment of the optimal time to relax the constraint.  We also refine the classical proportional reinsurance problem.  
	Although a novel form of mixed control-stopping problem is developed mathematically, we also derive explicit solutions and present economic explanations based on the results.

	Extensive further research can be based on this study.   For example, high-dimensional models can be considered.  We allow the surplus process to be negative, allowing bankruptcy to be considered in future studies.  A one-time setup cost can also be incorporated into the model.    An alternative utility function can certainly be considered, which could be a challenging open problem.   

	\begin{appendix}
		\section{Some Proofs}
		\subsection{Proof of Theorem \ref{thm1}}
			\begin{proof}
		For any  admissible control pair $(\tau, f)\in \mathcal{A}_t^\tau$, by It\^o's formula, 
		\begin{align}\label{ito1}
		v(t, x) = \mathbb{E}_{t, x}\left[v(\tau, X_{\tau}^{\tau, f}) \right] + \mathbb{E}_{t, x}\left[\int_{t}^{\tau}\left(-\partial_tv - \mathcal{L}^fv\right)(s, X_s^{\tau, f})ds\right] + \mathbb{E}_{t, x}\left[-\int_{t}^{\tau}\partial_xv(\sigma_sf_s)^\top dW_s\right].
		\end{align}
		As the condition \ref{condition3'} is satisfied and $f$ is admissible,  the Cauchy-Schwarz inequality determines that 
		\begin{align*}
		\mathbb{E}_{t,x}\left[\int_{t}^{T}(\partial_{x}v)^2(\sigma_sf_s)^\top(\sigma_sf_s) ds\right] \leq C\left\{\mathbb{E}_{t, x}\left[\int_{t}^{T}(\partial_{x}v)^4ds\right]\right\}^{1/2}\left\{\mathbb{E}_{t, x}\left[\int_{t}^{T}\|f_s\|^4ds\right]\right\}^{1/2} < +\infty, 
		\end{align*}
		for a positive constant $C$. Thus, the last term in \eqref{ito1} is a martingale. Then, by the VI \eqref{VI0}, 
		\begin{align*}
		& v(t, x) = \mathbb{E}_{t, x}\left[v(\tau, X_{\tau}^{\tau, f}) \right] + \mathbb{E}_{t, x}\left[\int_{t}^{\tau}\left(-\partial_tv - \mathcal{L}^fv\right)(s, X_s^{\tau, f})ds\right]\\
		&\geq  \mathbb{E}_{t, x}\left[v(\tau, X_{\tau}^{\tau, f}) \right] + \mathbb{E}_{t, x}\left[\int_{t}^{\tau} \left(-\partial_tv - \max_{f\in \mathcal{D}_1} \left\{\mathcal{L}v\right\}\right)(s, X_s^{\tau, f})ds\right]\\
		& \ge \mathbb{E}_{t, x}\left[v^1(\tau, X_{\tau}^{\tau, f}) \right]\\
		& = \mathbb{E}_{t, x}\left[v^1(T, X_T^{\tau, f})\right] + \mathbb{E}_{t, x}\left[\int_{\tau}^{T}\left(-\partial_xv^1 -\mathcal{L}_1^fv^1 \right)(s, X_s^{\tau, f})ds \right]+ \mathbb{E}_{t, x}\left[-\int_{t}^{\tau}\partial_xv^1(\sigma_sf_s)^\top dW_s\right].
		\end{align*}
		Again, by condition \ref{condition3'} and the admissibility of $f$, the last term in the above equation is a martingale, and 
		\begin{align*}
		v(t, x) & \geq \mathbb{E}_{t, x}\left[v^1(T, X_T^{\tau, f})\right] + \mathbb{E}_{t, x}\left[\int_{\tau}^{T}\left( -\partial_tv^1 - \max_{f \in \mathcal{D}_2 } \left\{ \mathcal{L}_1v^1\right\}\right)(s, X_s^{\tau, f})ds \right]\\
		&\geq \mathbb{E}_{t,  x}[U(T, X_T^{\tau, f})]. 
		\end{align*}
		As $(\tau, f)$ is arbitrary, we have $v(t, x) \geq V(t, x)$. 
		
		Similarly, using It\^o's formula, 
		\begin{align*}
		&v(t, x) = \mathbb{E}_{t, x}\left[v(\bar{\tau}, X^{\bar{\tau}, \bar{f}}_{\bar{\tau}}) \right] + \mathbb{E}_{t, x}\left[\int_{t}^{\bar{\tau}} \left(-\partial_tv - \mathcal{L}^{\bar{f}}v\right)(s, X^{\bar{\tau}, \bar{f}}_s)ds\right] + \mathbb{E}_{t, x}\left[-\int_{t}^{\bar{\tau}}\partial_xv(\sigma_s\bar{f}_s)^\top dW_s\right]\\
		& = \mathbb{E}_{t, x}\left[v^1(\bar{\tau}, X^{\bar{\tau}, \bar{f}}_{\bar{\tau}}) \right] \\
		&= \mathbb{E}_{t, x}\left[v^1(T, X^{\bar{\tau}, \bar{f}}_T)\right] + \mathbb{E}_{t, x}\left[\int_{\bar{\tau}}^{T}\left( -\partial_tv^1 - \mathcal{L}_1^{\bar{f}}v^1\right)(s, X^{\bar{\tau}, \bar{f}}_s)ds \right] + \mathbb{E}_{t, x}\left[-\int_{\bar{\tau}}^{T}\partial_xv^1(\sigma_s\bar{f}_s)^\top dW_s\right]\\
		& = \mathbb{E}_{t, x}\left[U(T, X_T^{\bar{\tau}, \bar{f}})\right] \leq V(t, x).
		\end{align*}
	\end{proof}
		
     \subsection{Proof of Lemma \ref{lem0}}
     \begin{proof}
		For any admissible $f \in \mathcal{A}_t^t$ and fixed $x_2> x_1$, we have
		\begin{equation}\label{linear}
		X_T^{x_i, f} = x_ie^{r(T-t)} + \int_{t}^{T}e^{r(T-s)}(\mu f_s -\delta - \rho)ds  + \int_{t}^{T}e^{r(T-s)}f_s\sigma dW_s, ~ i = 1,2. 
		\end{equation}
		The property of the utility function under Assumption \ref{as:U} clearly demonstrates that $V(x, t)$ is a nondecreasing function with respect to $x$. 
		
		For the concavity of $V^{(1)}$, note that $\mathcal{D}_2 = [0, \infty)$ is a convex set. If $f_1, f_2 \in \mathcal{D}_2$ are the admissible controls for $x_1 and x_2$ respectively, then $\alpha f_1 + (1 - \alpha)f_2 \in \mathcal{D}_2$ is admissible for $\alpha x_1 + (1 - \alpha)x_2$ with $\alpha \in [0, 1]$. From the  concavity of $U$ and the linearity of the solution \eqref{linear}, we have 
		\begin{align*}
		&\alpha\mathbb{E}_{t}[U(T, X_T^{x_1, f_1})] + (1 - \alpha) \mathbb{E}_{t}[U(T, X_T^{x_2, f_2})]
		\leq \mathbb{E}_t\left[U(T, \alpha X_T^{x_1, f_1} + (1 - \alpha)X_T^{x_2, f_2}) \right]\\
		&= \mathbb{E}_t\left[U\left(T, X_T^{\alpha x_1 + (1 - \alpha)x_2,~ \alpha f_1 + (1 - \alpha)f_2}\right)\right] \leq V^{(1)}(\alpha x_1 + (1 - \alpha)x_2).
		\end{align*}
		By taking the supremum of $f_1, f_2$ in $\mathcal{D}_2$ on both sides of the above equation, the result follows. 
     \end{proof}
		
		\subsection{Proof of Lemma \ref{lem:V(1)}}
		\begin{proof}
			From \eqref{HJB},
			$$  \partial_tV^{(1)} + \max_{f \ge 0}\left\{ \frac{1}{2}\sigma^2{f}^2\partial_{xx} V^{(1)}+ (rx +\mu f- \delta - \rho)\partial_xV^{(1)} \right\}  =  0, ~ V^{(1)}(t, x) =- \frac{1}{m}e^{-mx}. $$
			Suppose that $\partial_xV^{(1)} > 0$ and $\partial _{xx}V^{(1)} < 0$.  By substituting $f = f^1 \triangleq -\frac{\mu \partial_xV^{(1)}}{\sigma^2 \partial _{xx}V^{(1)}} > 0$ into the above equation, we have
			\begin{equation}\label{PDE1}
			\partial_tV^{(1)}  +  (rx -\delta -\rho)\partial_xV^{(1)} - \frac{1}{2}\frac{\mu^2 \left(\partial_xV^{(1)}\right)^2}{\sigma^2\partial _{xx}V^{(1)}} = 0.
			\end{equation}
			
			Consider the ansatz:
			\begin{align*}
			V^{(1)}(t,x) = -\frac{1}{m}\exp\left\{ (-mx -d(t))\exp(r(T-t))  + g(t) \right\}, 
			\end{align*}
			where $d(t)$ and $g(t)$ are deterministic functions with the terminal conditions $d(T) = 0$ and $g(T ) = 0$, respectively. 
			Then,
			\begin{align*}
			\partial_xV^{(1)} &= -m\exp(r(T-t))V^{(1)}; ~ \partial_{xx}V^{(1)} = m^2\exp(2r(T-t))V^{(1)}; \\
			\partial_tV^{(1)} &= \left\{-r(-mx -d(t))\exp(r(T-t)) -d'(t)\exp(r(T-t)) + g'(t) \right\}V^{(1)}. 
			\end{align*}
			By substituting $\partial_xV^{(1)}, \partial_{xx}V^{(1)}, \partial_tV^{(1)}$ into \eqref{PDE1}, we obtain 
			$$ -\frac{\mu^2}{2\sigma^2} + (\delta + \rho)m\exp(r(T-t)) + rd(t)\exp(r(T-t)) - d'(t)\exp(r(T-t)) + g'(t) = 0. $$
			Hence, 
			\begin{align*}
			\left\{
			\begin{array}{lr}
			-\frac{\mu^2}{2\sigma^2} + g'(t) = 0,\\
			g(T) = 0. 
			\end{array} 
			\right.
			~ \left\{
			\begin{array}{lr}
			(\delta + \rho)m + rd(t) - d'(t) = 0,\\
			d(T) = 0. 
			\end{array} 
			\right.
			\end{align*}
			The explicit solution for $V^{(1)}$ follows. In addition, we have 
			$$f^1_t =  -\frac{\mu \partial_xV^{(1)}}{\sigma^2 \partial _{xx}V^{(1)}} = \frac{\mu}{\sigma^2 m} \exp(- r(T-t)).$$
			Thus, after exercise time $\tau$,  $f^*_t = \frac{\mu}{\sigma^2 m} \exp(-r(T-t)).$
		\end{proof}
		
		\subsection{Proof of Theorem \ref{thm:tautilde}}
		
		\begin{proof}
			Clearly, $P(t, x) \geq 0$ for $t \in [t_1, T]$. We verify that $P$ given in \eqref{P} satisfies \eqref{VI:P} in the following. 
			For $t \in [t_2, T]$, \eqref{VI:P} clearly holds under the condition \eqref{parameter2}.  For $t \in [t_1, t_2)$,  we have 
			\begin{align*}
			&\partial_xP(t, x) = -m\exp(r(T-t))P(t, x), ~ \partial_{xx}P(t, x)  = m^2\exp(2r(T-t))P(t, x)\\
			&\partial_tP(t, x) = -mh(t)e^{\int_{t}^{t_2}mh(s)ds}V^{(1)}(t,x) -(1- e^{\int_{t}^{t_2}mh(s)ds})\partial_tV^{(1)} \\
			&= -mh(t)e^{\int_{t}^{t_2}mh(s)ds}V^{(1)}(t,x) -(1- e^{\int_{t}^{t_2}mh(s)ds})\left[(rmx -(\delta + \rho) m)\exp(r(T-t)) + \frac{\mu^2}{2\sigma^2} \right]V^{(1)}. 
			\end{align*}
			Then,
			\begin{align*}
			&\partial_tP + \widetilde{\mathcal{L}}P  \\
			&= \bigg\{-mh(t)e^{\int_{t}^{t_2}mh(s)ds}V^{(1)}(t,x) -(1- e^{\int_{t}^{t_2}mh(s)ds})\left[(rmx -(\delta + \rho) m)\exp(r(T-t)) + \frac{\mu^2}{2\sigma^2} \right]\\
			& -\sigma^2m^2\exp(2r(T-t))(1- e^{\int_{t}^{t_2}mh(s)ds}) + (rx + \mu -\delta)m\exp(r(T-t))(1- e^{\int_{t}^{t_2}mh(s)ds})\bigg\}V^{(1)}\\
			& = \left[-mh(t)e^{\int_{t}^{t_2}mh(s)ds} -mh(t)(1- e^{\int_{t}^{t_2}mh(s)ds}) \right]V^{(1)} = -mh(t)V^{(1)}. 
			\end{align*}
			Thus, \eqref{VI:P} holds. 
			
			The continuity of $P(t, x)$ is obvious, and $\partial_x P(t, x)$ is then also continuous. For $\partial_t P(t, x)$, we have 
			\begin{align}\label{P'}
			\partial_t P(t, x) =\left\{
			\begin{array}{lr}
			-mh(t)V^{(1)}(t,x) -(1- e^{\int_{t}^{t_2}mh(s)ds})\partial_tV^{(1)}(t,x), \text{ if } t \in [t_1, t_2).\\
			0, ~ \text{if } t \in [t_2, T], 
			\end{array}
			\right.
			\end{align}
			We then consider the following two cases.
			\begin{enumerate}
				\item[(i)] $t_1 = t_2$. Then, $h(t_1) \geq 0$ and $[t_1, t_2)$ is an empty set. So, $P \equiv 0$ for $t \in [t_2, T]$. $\partial_t P(t, x) \equiv 0$ is a continuous function. The exercise boundary is $\tau^* = t_2$. 
				\item[(ii)]$t_1< t_2$. Here, $h(t_1) < 0$ and $h(t_2) = 0$. Thus, $\partial_t P(t, x)$ in \eqref{P'} is still a continuous function. 
			\end{enumerate}
			Therefore, the exercise boundary for Problem \eqref{Vtilde} is $t = t_2$. 
		\end{proof}
		
	\subsection{Proof of Theorem \ref{thm:tau}}
	\label{proof:thm3}
	\begin{proof}
   	For Problem \eqref{def:v}, we represent  the optimal investment strategy in $\textbf{CR}$ by $f^*\in [0, \beta]$ for $t> t_1$.  By \eqref{VI}, $V$ satisfies the VI:
   	\begin{align*}
   	\left\{\begin{array}{lr}
   	-\partial_tV - \left\{\frac{1}{2}\sigma^2{f^*}^2\partial_{xx}V + (rx + \mu f^* - \delta)\partial_xV\right\} = 0, ~\text{if} ~ V > V^{(1)};\\
   	-\partial_tV -  \left\{\frac{1}{2}\sigma^2{f^*}^2\partial_{xx}V + (rx + \mu f^* - \delta)\partial_xV\right\} \geq 0,~ \text{if} ~ V = V^{(1)};\\
   	V(T, x) = U(T,x).
   	\end{array}
   	\right.
   	\end{align*}
   	Similar to the method used in Section \ref{subsec:Simple}, by \eqref{VItilde}, 
   	\begin{align*}
   	& \partial_t\widetilde{V}  + \widetilde{\mathcal{L}}\widetilde{V} = \partial_t\widetilde{V} + \frac{1}{2}\beta^2\sigma^2\partial_{xx}\widetilde{V} + (rx + \beta\mu - \delta)\partial_x\widetilde{V}\\
   	& = \partial_t\widetilde{V} + \frac{1}{2}\sigma^2{f^*}^2\partial_{xx}\widetilde{V} + (rx + \mu f^* - \delta)\partial_x\widetilde{V} + \frac{1}{2}\sigma^2(\beta^2 -{f^*}^2)\partial_{xx}\widetilde{V} + \mu(\beta -f^*)\partial_x\widetilde{V}\\
   	& =  \partial_t\widetilde{V} + \frac{1}{2}\sigma^2{f^*}^2\partial_{xx}\widetilde{V} + (rx + \mu f^* - \delta)\partial_x\widetilde{V} + (\beta - f^*)\left\{\frac{1}{2}\sigma^2(\beta + f^*)\partial_{xx}\widetilde{V} +\mu\partial_x\widetilde{V} \right\}.
   	\end{align*}
   	Let 
   	$$D(t, x) \triangleq (\beta - f^*)\left\{\frac{1}{2}\sigma^2(\beta + f^*)\partial_{xx}\widetilde{V} +\mu\partial_x\widetilde{V}\right\} = (\beta - f^*)\sigma^2\partial_{xx}\widetilde{V}\left\{
   	\frac{1}{2}(\beta + f^*) + \frac{\mu\partial_x\widetilde{V}}{\sigma^2\partial_{xx}\widetilde{V}} \right\}. $$
   	From the explicit form of $\widetilde{V}$ given in Corollary \ref{corVtilde}, 
   	$$ \frac{\mu\partial_x\widetilde{V}}{\sigma^2\partial_{xx}\widetilde{V}} = - \frac{\mu}{\sigma^2 m} \exp(- r(T-t)) \leq  -\beta, \text{ when } t \geq t_1. $$
   	Thus, $ \frac{1}{2}(\beta + f^*) + \frac{\mu\partial_x\widetilde{V}}{\sigma^2\partial_{xx}\widetilde{V}} \leq 0$.  As $\beta - f^* \geq 0$ and $\partial_{xx}\widetilde{V}> 0$, it follows that $D(t, x) \leq 0$. 
   	
   	By the comparison principle for VIs (see, e.g., \cite{F1982}, \cite{YLY}),  we have $\widetilde{V}(t, x) \geq V(t, x)$ for $t \in [t_1, T]$, while it is clear that $\widetilde{V}(t, x) \leq V(t, x)$.  Therefore, $\widetilde{V}(t, x) = V(t, x)$ for $ t \in [t_1, T]$. By Lemma \ref{lem:CR}, the result follows. 
   \end{proof}
	\subsection{Proof of Proposition \ref{pro:control}}
	\label{append:pro5}
		\begin{proof}
			By Lemma \ref{lem:CR} and \eqref{VI},  V satisfies 
			\begin{equation}\label{pdeV}
			\partial_tV +  \max_{f\in [0, \beta]} \left\{ \frac{1}{2}\sigma^2f^2\partial_{xx}V + (rx + \mu f - \delta)\partial_xV\right\} = 0,
			\end{equation}
			when $t \in [0, t_1)$.  For $t \in [0, t_1)$,  assume that 
			$$ V(t, x) =  -\frac{1}{m}\exp\left\{ (-mx -d_0(t))\exp(r(T-t))  + g_0(t) \right\}, $$
			for some deterministic functions $d_0$ and $g_0$.  Then, 
			\begin{align*}
			\partial_xV&= -m\exp(r(T-t))V; ~ \partial_{xx}V = m^2\exp(2r(T-t))V; \\
			\partial_tV &= \left\{-r(-mx -d_0(t))\exp(r(T-t)) -d_0'(t)\exp(r(T-t)) + g_0'(t) \right\}V. 
			\end{align*}
			Thus, by the definition of $t_1$, we have 
			$$ 0 < f^*_t = - \frac{\mu \partial_{x}V}{\sigma^2\partial_{xx}V}  =  \frac{\mu}{\sigma^2 m}\exp(-r(T-t)) < \beta,$$
			when $0<t < t_1$.   Substituting $f = f^*_t$ into \eqref{pdeV},  we have 
			\begin{align*}
			\partial_tV +  \frac{\mu^2}{2\sigma^2m^2}\exp(-2r(T-t)) \partial_{xx}V + \left(rx +\frac{\mu^2}{\sigma^2 m}\exp(-r(T-t))  - \delta\right)\partial_xV = 0.
			\end{align*}
			It follows that 
			\begin{align*}
			d_0'(t)  - rd_0(t) - \delta m = 0,\\
			g_0'(t) - \frac{\mu^2}{2\sigma^2} = 0. 
			\end{align*}
			From the continuity of $V$ at $t_1$, we have the terminal condition 
			\begin{align*}
			d_0(t_1) & = d(t_1)= -\frac{(\delta + \rho)m}{r}\left(1 - \exp(-r(T-t_1)) \right), \\
			g_0(t_1) &= g(t_1) = -\frac{\mu^2}{2\sigma^2}(T- t_1) + \int_{t_1}^{t_2}mh(s)ds. 
			\end{align*}
			The explicit form for $V$ follows by solving the two ODEs. From Theorem \ref{thm:tau}, the optimal reinsurance strategy $f^*$ follows. 
		\end{proof}
	\subsection{Proof of Corollary \ref{cor2}}
	\begin{proof}
   	We only need to prove that the condition \ref{condition3'} in Theorem \ref{thm1} is satisfied. 
   	By \eqref{X}, we have
   	\begin{equation*}
   	X_s^{\tau^*, f^*} = xe^{r(s-t)} + \int_{t}^{s}e^{r(s-\nu)}(\mu f^*_\nu -\delta - \rho\boldsymbol{1}_{\{\nu\ge\tau^*\}})d\nu  + \int_{t}^{s}e^{r(s-\nu)}f_\nu\sigma dW_\nu, ~ s \in [t, T]. 
   	\end{equation*}
   	From \eqref{explicitf}, $f^*$ is a bounded deterministic function. From the expression of $V$ in \eqref{explicitV},
   	\begin{align*}
   	& \mathbb{E}_{t, x}\left[\int_{t}^{T} \partial_xV(s, X_s^{\tau^*, f^*})^4ds\right] \leq 
   	C_1\mathbb{E}_{t, x}\left[\int_{t}^{T}e^{-m\exp(r(T-s))X_s^{\tau^*, f^*}}ds\right]\\
   	& = C_1\int_{t}^{T}\mathbb{E}_{t, x}\left[e^{-m\exp(r(T-s))X_s^{\tau^*, f^*}}\right]ds
   	\leq C_1\int_{t}^{T}\mathbb{E}_{t, x}\left[e^{C_2 + \int_t^s -m\exp(r(T -\nu)) f^*_\nu d W_\nu} \right]ds\\
   	& = C_1 \int_{t}^{T}e^{C_2 + \int_{t}^{s}m^2\exp(2r(T-\nu))f^*_\nu d\nu}ds < \infty. 
   	\end{align*}
   	for some positive constants $C_1$ and $C_2$. Similarly, the result for $V^{(1)}$ can be derived out. 
   \end{proof}
		
	\end{appendix}

\end{document}